\newtheorem{thm}{\textbf{\text{Theorem}}}
\newtheorem{lem}{\textbf{\text{Lemma}}}
\newtheorem{pb}{\textbf{\text{Problem}}}
\newtheorem{assumption}{\textbf{\text{Assumption}}}
\newtheorem{pro}{\textbf{\text{Proposition}}}
\newcommand{\ie}{\textit{i.e.}}
\newcommand{\eg}{\textit{e.g.}}
\title{\LARGE \bf
Bearing-based distributed pose estimation for multi-agent networks 
}
\author{Mouaad Boughellaba and Abdelhamid Tayebi
\thanks{This work was supported by the National Sciences and Engineering Research Council of Canada (NSERC), under the grants NSERC-DG RGPIN 2020-06270.} 
\thanks{The authors are with the Department of Electrical Engineering, Lakehead University, Thunder Bay, ON P7B 5E1, Canada \tt\small \{mboughel,atayebi\}@lakeheadu.ca.} 
}
\begin{document}

\maketitle
\thispagestyle{empty}
\pagestyle{empty}

\begin{abstract}
In this paper, we address the distributed pose estimation problem for multi-agent systems, where the agents have unknown static positions and time-varying orientations. The interaction graph is assumed to be directed and acyclic with two leaders that have access to their position and orientation. We propose a nonlinear distributed pose estimation scheme relying on individual angular velocity measurements and local relative (time-varying) bearing measurements. The proposed estimation scheme consists of two cascaded distributed observers, an almost globally asymptotically stable (AGAS) attitude observer and an input-to-state stable (ISS) position observer, leading to an overall AGAS distributed localization scheme. Numerical simulation results are presented to illustrate the performance of our proposed distributed pose estimation scheme.
\end{abstract}

\section{INTRODUCTION}
The distributed pose estimation problem for multi-agent networks consists of estimating the agents' poses (positions and orientations) in a distributed manner using some available absolute and relative measurements. Due to the importance of this problem in many applications related to multi-agent autonomous networks, significant research has been devoted to designing robust and reliable distributed pose localization schemes. According to the network's sensing capabilities, these schemes can be categorized as position-based, distance-based, and bearing-based schemes. Recently, the latter category is gaining in popularity because of the revolutionary development in bearing sensors. As a result, several bearing-based distributed position estimation solutions have been proposed in the literature \cite{Zhao_auto2015,Tang_CSL2023,TANG2021109567, TANG2022110289}. However, the aforementioned references assume that the bearings are expressed in a global reference frame (\ie, knowledge of agents' orientations with respect to a global reference frame), which, unfortunately, is not the case in most of the practical applications since the bearing measurements are usually obtained locally from a sensor (\eg, a camera) mounted on the agent. This motivated many authors to design a distributed attitude observer that can be fed into a position estimation scheme together with local bearing measurements to obtain an overall cascaded bearing-based distributed pose estimation scheme \cite{li_2020, lee_2019}. The idea consists in using the estimated attitudes to transform the local relative bearing measurements into the global reference frame and then use the transformed bearings in the position estimation law. In \cite{lee_2016, lee_auto2016, Tran_CDC2018, lee_2019}, the authors proposed several orientation estimation algorithms based on the consensus approach and the Gram-Schmidt procedure. Later, these algorithms were extended to deal with a time-varying orientation in an arbitrary dimensional space \cite{Tran_TCNS2019, Tran_TCNS2020}. Note that the Gram-Schmidt procedure sometimes fails when the estimated matrix is singular. In addition, most of the distributed attitude estimation schemes require relative attitude measurements, which are difficult to obtain since no low-cost setup can provide such measurements. To the best of the authors' knowledge, there are very few results in the literature that address the distributed pose estimation problem for multi-agent networks relying on local relative bearing measurements without using the relative attitude information. For instance, the authors of \cite{TRAN2020109125,BOUGHELLABA2023110949,Mouaad_CDC2022} proposed a distributed pose estimation schemes based only on local relative bearing measurements and some absolute measurements (angular and linear velocities).\\
In this paper, motivated by \cite{TRAN2020109125,BOUGHELLABA2023110949}, we propose a distributed pose estimation scheme for multi-agent networks, relying only on the angular velocity of each agent and the local time-varying inter-agent bearing measurements available according to a directed graph topology with a leader-follower structure. A rigorous stability analysis is provided asserting that the proposed estimation scheme enjoys almost global asymptotic stability.  On top of being much simpler than the pose estimator in \cite{TRAN2020109125,BOUGHELLABA2023110949}, the proposed pose estimation schemes is endowed with almost global asymptotic stability guarantees.\\
The remainder of this paper is organized as follows: Section \ref{s2} provides some preliminaries needed in this work. In Section \ref{s3}, we formulate the distributed pose estimation problem for multi-agent networks. Section \ref{s4} presents our proposed cascaded bearing-based distributed pose estimation solution. Simulation results and concluding remarks are presented in Section \ref{s5} and Section \ref{s6}, respectively.

\section{PRELIMINARIES}\label{s2}
\subsection{Notations}
The sets of real numbers and the n-dimensional Euclidean space  are denoted by $\mathbb{R}$ and $\mathbb{R}^n$, respectively. The set of unit vectors in $\mathbb{R}^n$ is defined as $\mathbb{S}^{n-1}:=\{x\in \mathbb{R}^n~|~x^T x =1\}$. Given two matrices $A$,$B$ $\in \mathbb{R}^{m\times n}$, their Euclidean inner product is defined as $\langle \langle A,B \rangle \rangle=\text{tr}(A^T B)$. The Euclidean norm of a vector $x \in \mathbb{R}^n$ is defined as $||x||=\sqrt{x^T x}$, and
the Frobenius norm of a matrix $A \in \mathbb{R}^{n\times n}$ is given by $||A||_F=\sqrt{\langle \langle A,A \rangle \rangle}$. The matrix $I_n \in \mathbb{R}^{n \times n}$ denotes the identity matrix.\\
The attitude of a rigid body is represented by a rotation matrix $R$ which belongs to the special orthogonal group $SO(3):= \{ R\in \mathbb{R}^{3\times 3} | \hspace{0.1cm}\text{det}(R)=1, R^TR=I_3\}$. The tangent space of the compact manifold $SO(3)$ is given by $T_RSO(3):=\{R \hspace{0.1cm}\Omega \hspace{0.2cm} | \hspace{0.2cm} \Omega \in \mathfrak{so}(3)\}$, where $\mathfrak{so}(3):=\{ \Omega \in \mathbb{R}^{3\times 3} | \Omega^T=-\Omega\}$ is the Lie algebra of the matrix Lie group $SO(3)$. The map $[.]^{\times}: \mathbb{R}^3 \rightarrow \mathfrak{so}(3)$ is defined such that $[x]^\times y=x \times y$, for any $x,y \in \mathbb{R}^3$, where $\times$ denotes the vector cross product on $\mathbb{R}^3$. The inverse map of $[.]^{\times}$ is $\text{vex}: \mathfrak{so}(3) \rightarrow \mathbb{R}^3$ such that $\text{vex}([\omega]^\times)=\omega$, and $[\text{vex}(\Omega)]^\times=\Omega$ for all $\omega \in \mathbb{R}^3$ and $\Omega \in \mathfrak{so}(3)$. Let $\mathbb{P}_a : \mathbb{R}^{3\times 3} \rightarrow \mathfrak{so}(3)$ be the projection map on the Lie algebra $\mathfrak{so}(3)$ such that $\mathbb{P}_a(A):=(A-A^T)/2$. Given a 3-by-3 matrix $C:=[c_{ij}]_{i,j=1,2,3}$, one has  $\psi(C) := \text{vex} \circ \mathbb{P}_a (C)=\text{vex}(\mathbb{P}_a(C))=\frac{1}{2}[c_{32}-c_{23},c_{13}-c_{31},c_{21}-c_{12}]^T$. For any $R\in SO(3)$, the normalized Euclidean distance on $SO(3)$, with respect to the identity $I_3$, is defined as $|R|_I^2:=\frac{1}{4}\text{tr}(I_3-R)$ $\in[0,1]$. The angle-axis parameterization of $SO(3)$, is given by $\mathcal{R}_\alpha(\theta, v):=I_3+sin\hspace{0.05cm}\theta \hspace{0.2cm}[v]^\times + (1-cos\hspace{0.05cm}\theta)([v]^\times)^2$, where $v\in \mathbb{S}^2$ and  $\theta \in \mathbb{R}$ are the rotation axis and angle, respectively. The orthogonal projection map $P:\mathbb{R}^3\rightarrow\mathbb{R}^{3\times3}$ is defined as $P_x:=I_3-\frac{x x^T}{||x||^2}$, $\forall x\in\mathbb{R}^3\setminus\{0_3\}$. One can verify that $P_x^T=P_x$, $P_x^2=P_x$, and $P_x$ is positive semi-definite. Given a subset $\mathcal{S} \subset \mathbb{N}$, where $\mathbb{N}$ is a set of natural numbers, the cardinality of $\mathcal{S}$ is denoted by $|\mathcal{S}|$. 

\subsection{Graph Theory}
Consider a network of $n$ agents. The interaction topology between the agents is described by a directed graph $\mathcal G = (\mathcal V,\mathcal E)$, where $\mathcal V=\{1, \hdots, n\}$ and $\mathcal E \subseteq \mathcal V \times \mathcal V $ represent the vertex (or agent) set and the edge set, respectively. In directed graphs, $(i,j) \in \mathcal E$ does not necessarily imply $(j,i) \in \mathcal E$. The set of neighbors of agent $i$ (from which this agent receives information) is defined as $\mathcal N_i = \{j \in \mathcal V : (i,j) \in \mathcal E \}$. A directed path is a sequence of edges in a directed graph $\mathcal{G}$. A directed graph $\mathcal{G}$ is said to be acyclic if it does not have any directed path that forms a closed loop.

\section{PROBLEM STATEMENT}\label{s3}
Consider a network of $n$ agents, where the motion of each agent $i \in \mathcal{V}$ is governed by the following rotational kinematic equation:
\begin{equation}\label{pb1_rotation}
    \dot{R}_i = R_i[\omega_i]^{\times},
\end{equation}
where $R_i \in SO(3)$ represents the orientation of the body-attached frame of agent $i$ with respect to the inertial frame, and $\omega_i\in \mathbb{R}^3$ is the angular velocity of agent $i$ measured in the body-attached frame of the same agent. Let $p_i \in \mathbb{R}^3$ denote the position of agent $i$ with respect to the inertial frame. In this work, we assume that the positions of the agents are fixed and do not change with time, \ie, $\dot{p}_i=0$, for all $i \in \mathcal{V}$.\\
\noindent The measurement of the local relative bearing between agent $i$ and agent $j$ is given by
\begin{align}\label{measurement_model_1}
    b_{ij}^i&:=R_i^T b_{ij},
\end{align}
where $b_{ij}:=\frac{p_j-p_i}{||p_j-p_i||}$ and $b_{ij}^i$ are the relative bearing measurements between agent $i$ and agent $j$ expressed in the inertial frame and the body-attached frame of agent $i$, respectively. The following assumptions are needed in our design:
\begin{assumption}\label{v_ass}
   The angular velocity of each agent is available for measurement and bounded.
\end{assumption}
\begin{assumption}\label{G_ass_2l}
   By assigning a number to each agent, we assume that agents $1$ and $2$ are the leaders and the other agents are the followers. We also assume that
   \begin{enumerate}[a.]
   \item \label{a22} The directed graph $\mathcal{G}$ is acyclic and each leader has at least one directed path to each follower. 
    \item \label{a21}  The leaders know their pose, and have no neighbors \ie, $\mathcal{N}_k = \{\varnothing\}$ $\forall k=1,2$.
    \item \label{a24} Each agent $i \in \mathcal{V}_f$, where $\mathcal{V}_f:=\mathcal{V}\setminus\{1, 2\}$ denotes the set of followers, measures $b_{ij}^i$ and receives $(\hat{R}_j, \hat{p}_j, b_{ji}^j)$ from its neighbors $j\in \mathcal{N}_i$. 
    \item \label{a23} No two agents are collocated, and the set of neighbors of each agent $i \in \mathcal{V}_f$ satisfies $\mathcal{N}_i \subseteq \{1,2,3,...,i-1\}$ and $|\mathcal{N}_i|\geq 2$ with at least two non-collinear bearing vectors measured by each agent.
\end{enumerate}   
\end{assumption}

\noindent Now, we will state the problem that is considered in this work.
\begin{pb}
Consider a network of $n$ rotating agents with fixed positions. Suppose Assumptions \ref{v_ass}-\ref{G_ass_2l} are satisfied. Design a bearing-based distributed pose (position and orientation) estimation scheme endowed with almost global asymptotic stability guarantees. 
\end{pb}

\section{MAIN RESULTS}\label{s4}

\subsection{Bearing-based Distributed Attitude Estimation on $SO(3)$}
For $i\in \mathcal{V}_f$, we propose the following attitude observer on $SO(3)$:
\begin{align}\label{R_obs}
    \dot{\hat{R}}_i &= \hat{R}_i\left[\omega_i-k_R \hat{R}_i^T \sum_{j \in \mathcal{N}_i} k_{ij}(\hat{R}_j b_{ij}^j \times \hat R_i b_{ij}^i)\right]^\times,
\end{align}
where $k_R, k_{ij} >0$, $\hat{R}_i \in SO(3)$ is the estimate of $R_i$, and $\hat{R}_l=R_l$, $l\in \{1,2\}$. Under Assumption \ref{G_ass_2l}, one has $||p_i-p_j||\neq0$ and consequently the bearing measurement  $b^i_{ij}$, for every $i\in \mathcal{V}$ and $j\in \mathcal{N}_i$, is well defined. Defining the attitude estimation error $\tilde{R}_i:=R_i \hat{R}_i^T$, the last term of (\ref{R_obs}) can be rewritten as follows:
\begin{align}\label{correcting_term}
    \sum_{j \in \mathcal{N}_i} k_{ij} (\hat{R}_j b_{ij}^j \times \hat{R}_i b_{ij}^i)&=\sum_{j \in \mathcal{N}_i} k_{ij} (b_{ij} \times \tilde{R}_i^T b_{ij})\nonumber\\
    +&\sum_{j \in \mathcal{N}_i} k_{ij}\left((\tilde{R}_j^T-I_3) b_{ij} \times \tilde{R}_i^T b_{ij}\right)\nonumber\\
    =&-2\psi(M_i\tilde{R}_i)\nonumber\\
    +&\sum_{j \in \mathcal{N}_i} k_{ij}\left((\tilde{R}_j^T-I_3) b_{ij} \times \tilde{R}_i^T b_{ij}\right),
\end{align}
where $M_i:=\sum_{j \in \mathcal{N}_i} k_{ij} b_{ij} b_{ij}^T$. The last equation was obtained using the fact $x \times y = 2 \psi(y x^T)$, $\forall x, y \in \mathbb{R}^3$. It is always possible to choose $k_{ij}>0$ such that the matrix $M_i$ is positive semi-definite with three distinct eigenvalues. From \eqref{pb1_rotation} and \eqref{R_obs}, it follows that the time derivative of the attitude estimation error, for every $i\in \mathcal{V}_f$, is given by
 \begin{align}
    \dot{\tilde{R}}_i =& -R_i\left[\omega_i-k_R \hat{R}_i^T \sum_{j \in \mathcal{N}_i} k_{ij}(\hat{R}_j b_{ij}^j \times \hat R_i b_{ij}^i)\right]^\times\hat{R}_i^T\nonumber\\
    &+R_i[\omega_i]^\times \hat{R}_i^T.
\end{align}
Since $[x+y]^\times=[x]^\times+[y]^\times$ and $[Rx]^\times = R[x]^\times R^T$, $\forall x, y\in \mathbb{R}^3$ and $R\in SO(3)$, and in view of \eqref{correcting_term}, one can simplify the last equation as follows:
\begin{equation}\label{forced_sys}
    \dot{\tilde R}_i = -2k_R \tilde R_i\left[\psi(M_i \tilde R_i)\right]^\times+k_R\tilde R_i\left[\sum_{j \in \mathcal{N}_i} k_{ij}g_{ij}(\tilde R_j)\right]^\times,
\end{equation}
where $g_{ij}(\tilde R_j):=(\tilde R^T_j-I_3)b_{ij} \times \tilde R_i^T b_{ij}$. Note that the term $g_{ij}(\tilde R_j)$ is bounded and vanishes when $\tilde R_j = I_3$. Furthermore, system \eqref{forced_sys} can be viewed as a cascaded system, where the attitude estimation errors of the neighbors are considered as inputs to the following unforced system:
\begin{equation}\label{unforced_sys}
    \dot{\tilde R}_i = -2k_R \tilde R_i\left[\psi(M_i \tilde R_i)\right]^\times.
\end{equation}
The following lemma provides the stability properties of the equilibria of system \eqref{unforced_sys}. 
\begin{lem}\label{lemma1}
   Let $k_{ij}>0$ such that $M_i$ is positive semi-definite with three distinct eigenvalues. Then, the following statements hold for all $i\in \mathcal{V}_f$:
   \begin{enumerate}[i)]
       \item All solutions of \eqref{unforced_sys} converge to the following set of isolated equilibria: $\Upsilon: =\{I_3\}\cup\{\tilde R_i=\mathcal{R}_\alpha (\pi ,v_i) | v_i \in \mathcal{E}(M_i)\}\nonumber$, where $\mathcal{E}(M_i)\subset \mathbb{S}^2$ is the set of unit eigenvectors of matrix $M_i$. \label{set_of_equil}
       \item The desired equilibrium $\tilde R_i=I_3$ is locally exponentially stable. \label{des_equi}
       \item The linearized system of (\ref{unforced_sys}), at each undesired equilibrium $\Upsilon/\{I_3\}$, has at least one positive eigenvalue. \label{undes_equi} 
       \item The undesired equilibria $\Upsilon/\{I_3\}$ are unstable and the desired equilibrium $\tilde R_i=I_3$ is AGAS. \label{stability_of_equilibrium}
   \end{enumerate}
\end{lem}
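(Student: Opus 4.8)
The plan is to treat \eqref{unforced_sys} as a gradient-type flow driven by the Lyapunov function $V(\tilde R_i):=\text{tr}\big(M_i(I_3-\tilde R_i)\big)$. First I would differentiate $V$ along \eqref{unforced_sys}: using $M_i=M_i^\top$ together with the identities $\text{tr}(A[u]^\times)=\text{tr}(\mathbb{P}_a(A)[u]^\times)=-2\,\psi(A)^\top u$ (the symmetric part of $A$ contributes nothing against the skew-symmetric $[u]^\times$), the derivative collapses to $\dot V=-4k_R\|\psi(M_i\tilde R_i)\|^2\le 0$. LaSalle's invariance principle then drives every solution into $\{\tilde R_i\in SO(3):\psi(M_i\tilde R_i)=0\}$, i.e.\ the set where $M_i\tilde R_i$ is symmetric; since this set consists solely of equilibria of \eqref{unforced_sys}, it is its own largest invariant set. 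The core of part \ref{set_of_equil} is the algebraic characterization of this set: I would show that $M_i\tilde R_i=\tilde R_i^\top M_i$, with $M_i$ symmetric and possessing three distinct eigenvalues, admits only $\tilde R_i=I_3$ and the three rotations $\mathcal{R}_\alpha(\pi,v_i)=2v_iv_i^\top-I_3$ about the eigenaxes $v_i\in\mathcal{E}(M_i)$. Distinctness of the eigenvalues is exactly what rules out a continuum of solutions and guarantees that these four equilibria are isolated.

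For part \ref{des_equi} I would linearize \eqref{unforced_sys} about $\tilde R_i=I_3$ via the first-order parametrization $\tilde R_i\approx I_3+[\epsilon]^\times$, $\epsilon\in\mathbb{R}^3$. Substituting $\psi\big(M_i(I_3+[\epsilon]^\times)\big)=\psi(M_i[\epsilon]^\times)=\tfrac12(\text{tr}(M_i)I_3-M_i)\epsilon$, which follows from the identity $M_i[\epsilon]^\times+[\epsilon]^\times M_i=[(\text{tr}(M_i)I_3-M_i)\epsilon]^\times$ valid for symmetric $M_i$, yields the linear system $\dot\epsilon=-k_R(\text{tr}(M_i)I_3-M_i)\epsilon$. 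Since at most one eigenvalue of $M_i$ can vanish, the matrix $\text{tr}(M_i)I_3-M_i$ has the strictly positive eigenvalues $\sum_{l\neq k}\lambda_l>0$ and is thus positive definite, so the system matrix is Hurwitz and $\tilde R_i=I_3$ is locally exponentially stable.

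Part \ref{undes_equi} reuses the same linearization recipe at an undesired equilibrium $\bar R=2v_kv_k^\top-I_3$, now with the perturbation $\tilde R_i=\bar R\,(I_3+[\epsilon]^\times)$. The key observation is that $S:=M_i\bar R=2\lambda_kv_kv_k^\top-M_i$ is still symmetric, so the same identity gives $\dot\epsilon=-k_R(\text{tr}(S)I_3-S)\epsilon$. Diagonalizing $S$ in the eigenbasis of $M_i$, where $Sv_k=\lambda_kv_k$ and $Sv_j=-\lambda_jv_j$ for $j\neq k$, shows the system matrix has eigenvalues $k_R(\lambda_p+\lambda_q)$, $k_R(\lambda_q-\lambda_k)$, $k_R(\lambda_p-\lambda_k)$, with $\{p,q\}=\{1,2,3\}\setminus\{k\}$. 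The first is strictly positive because $\lambda_p+\lambda_q>0$, which establishes at least one positive eigenvalue at every undesired equilibrium.

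Finally, for part \ref{stability_of_equilibrium} I would invoke Lyapunov's indirect method: by part \ref{undes_equi} each undesired equilibrium is hyperbolic, since the three eigenvalues above are nonzero thanks to the distinctness of the $\lambda_l$, and carries an unstable eigenvalue, hence is unstable with a stable manifold of dimension at most two. The union of these three stable manifolds therefore has zero Lebesgue measure, so every solution outside this negligible set converges, by part \ref{set_of_equil}, to the only locally stable equilibrium $I_3$; combined with part \ref{des_equi} this yields almost global asymptotic stability. I expect the algebraic equilibrium characterization in part \ref{set_of_equil} to be the principal difficulty, whereas the remaining parts reduce to routine linearizations followed by standard stability and measure-theoretic arguments.
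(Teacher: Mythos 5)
Your proposal is correct and follows essentially the same route as the paper: the same Lyapunov function $\text{tr}(M_i(I_3-\tilde R_i))$ (up to the factor $1/4$), the same LaSalle argument reducing convergence to $\psi(M_i\tilde R_i)=0$, the same linearization-based local analysis at $I_3$ and at the $\pi$-rotation equilibria, and the same measure-zero stable-manifold argument for AGAS. The only difference is that you carry out explicitly (and correctly, including the identity $M[\epsilon]^\times+[\epsilon]^\times M=[(\text{tr}(M)I_3-M)\epsilon]^\times$ and the resulting eigenvalue formulas) the two steps the paper delegates to citations, namely the algebraic characterization of $\Upsilon$ from \cite{Mayhew_ACC2011} and the linearization analysis from \cite{Miaomiao_TAC2021}.
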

\begin{proof}
 Consider the following Lyapunov function candidate:
   \begin{equation}
       \mathcal L_i = \frac{1}{4}\text{tr}\left(M_i\left(I_3-\tilde R_i\right)\right),
   \end{equation}
   whose time-derivative, along the trajectories of \eqref{unforced_sys}, is given by
   \begin{align}
       \dot{\mathcal{L}}_i = \frac{k_R}{2} \text{tr}\left(M_i \tilde R_i \left[\psi(M_i\tilde{R}_i)\right]^\times \right).
   \end{align}
   Using the facts that $\text{tr}\left(B[x]^\times\right)=\text{tr}\left(\mathbb{P}_a(B)[x]^\times\right)$ and $\text{tr}\left([x]^\times [y]^\times\right)=-2x^Ty$, for every $x, y\in \mathbb{R}^3$ and $B \in \mathbb{R}^{3\times3}$, one has
   \begin{equation}
       \dot{\mathcal{L}}_i= - k_R ||\psi(M_i \tilde R_i)||^2 \leq 0.
   \end{equation}
   Since system \eqref{unforced_sys} is autonomous, by virtue of LaSalle's invariance theorem, the attitude error $\tilde R_i$ should converge to the largest invariant set contained in the set characterized by $\dot{\mathcal{L}}_i= 0$, \ie, $\psi(M_i \tilde R_i)=0$. As per \cite[Lemma 2]{Mayhew_ACC2011}, $\psi(M_i \tilde R_i)=0$ implies that $\tilde R_i \in \Upsilon$.\\
   Since the matrix $M_i$ is positive semi-definite with three distinct eigenvalues, it follows that the equilibrium points, in the set $\Upsilon$, are isolated. Moreover, following the same arguments as in \cite[Theorem 1]{Miaomiao_TAC2021}, one can show that the desired equilibrium $\tilde R_i = I_3$ is locally exponentially stable and the dynamics of the first order approximation of $\tilde R_i$ around each undesired equilibrium has at least one positive eigenvalue. Accordingly, the desired equilibrium $\tilde R_i=I_3$ of \eqref{unforced_sys} is AGAS. This completes the proof.
\end{proof}
In the next lemma, we will study the ISS property of the forced attitude error dynamics \eqref{forced_sys}, with respect to its inputs (\ie, $\tilde R_j$ with $j\in \mathcal{N}_i$), using the notion of almost global ISS introduced in \cite{Angeli_TAC2011}.  
\begin{lem}\label{lemma2}
   Let $M_i$ be positive semi-definite with three distinct eigenvalues. Then, system (\ref{forced_sys}) is almost globally ISS, for each $i\in \mathcal{V}_f$, with respect to $I_3$ and inputs $\tilde R_j$.
\end{lem}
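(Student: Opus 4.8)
The plan is to promote the Lyapunov function $\mathcal{L}_i=\tfrac14\mathrm{tr}(M_i(I_3-\tilde R_i))$ from Lemma~\ref{lemma1} into a dissipation inequality suited to the almost-global ISS framework of \cite{Angeli_TAC2011}. First I would differentiate $\mathcal{L}_i$ along the full forced dynamics \eqref{forced_sys}, collecting the correcting term into the effective input $u_i:=\sum_{j\in\mathcal{N}_i}k_{ij}g_{ij}(\tilde R_j)$. Reusing the identities $\mathrm{tr}(B[x]^\times)=\mathrm{tr}(\mathbb{P}_a(B)[x]^\times)$ and $\mathrm{tr}([x]^\times[y]^\times)=-2x^\top y$ already invoked in the proof of Lemma~\ref{lemma1}, the term $-2k_R\tilde R_i[\psi(M_i\tilde R_i)]^\times$ reproduces the nonpositive term from the unforced analysis, while the input enters only linearly, giving
\[
\dot{\mathcal{L}}_i=-k_R\|\psi(M_i\tilde R_i)\|^2+\tfrac{k_R}{2}\,\psi(M_i\tilde R_i)^\top u_i.
\]
A Young's inequality on the cross term then yields $\dot{\mathcal{L}}_i\le -c_1\|\psi(M_i\tilde R_i)\|^2+c_2\|u_i\|^2$ for some $c_1,c_2>0$, which is the dissipation relation I want everywhere off the set where $\psi(M_i\tilde R_i)$ vanishes.

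Second, I would convert this into a statement about the genuine inputs $\tilde R_j$. A direct computation gives $\|\tilde R_j^\top-I_3\|_F=2\sqrt2\,|\tilde R_j|_I$, so $g_{ij}(\tilde R_j)=(\tilde R_j^\top-I_3)b_{ij}\times\tilde R_i^\top b_{ij}$ is bounded by a constant multiple of $|\tilde R_j|_I$ and vanishes at $\tilde R_j=I_3$; hence $\|u_i\|\le\gamma(\max_{j\in\mathcal{N}_i}|\tilde R_j|_I)$ for some $\gamma\in\mathcal{K}$ (in fact linear). Because $\tilde R_i$ evolves on the compact manifold $SO(3)$, every trajectory is bounded for every input, so the uniform bounded-input bounded-state requirement holds automatically and only the asymptotic-gain/convergence property remains.

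Third, I would assemble the ingredients of Angeli's almost-ISS criterion. Near $I_3$ the quantity $\|\psi(M_i\tilde R_i)\|$ is comparable to $|\tilde R_i|_I$, so the dissipation inequality locally takes the standard ISS form $\dot{\mathcal{L}}_i\le-\alpha(|\tilde R_i|_I)+c_2\|u_i\|^2$ with $\alpha$ positive definite; combined with the local exponential stability of $I_3$ established in Lemma~\ref{lemma1}, this delivers local ISS at $I_3$. The zero-input system is almost globally asymptotically stable by Lemma~\ref{lemma1}, and its undesired equilibria are hyperbolic with at least one unstable direction (Lemma~\ref{lemma1}), so their stable manifolds---the only zero-input initial conditions failing to reach $I_3$---form a set of measure zero. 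Feeding local ISS, $0$-AGAS, and the instability of $\Upsilon\setminus\{I_3\}$ into the almost-ISS result of \cite{Angeli_TAC2011} yields almost-global ISS of \eqref{forced_sys} with respect to $I_3$ and the inputs $\tilde R_j$.

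The step I expect to be the main obstacle is precisely this last passage from the local/zero-input facts to a genuinely almost-global estimate under a nonzero time-varying input. The inequality above forces $\mathcal{L}_i$ to decrease only outside the input-dependent set $\{\,\|\psi(M_i\tilde R_i)\|\lesssim\|u_i\|_\infty\,\}$, which for small inputs is a union of neighborhoods of \emph{all} equilibria, the undesired ones included; so the decrease of $\mathcal{L}_i$ by itself cannot exclude a trajectory from stalling near some $\tilde R_i\in\Upsilon\setminus\{I_3\}$. Ruling this out for almost every initial condition hinges on the robustness of the hyperbolic instability from Lemma~\ref{lemma1}: the repelling behavior persists under bounded perturbations, so the set of initial conditions whose forced trajectories remain trapped near the undesired equilibria stays of measure zero. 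Making this perturbation-robustness quantitative, which is exactly what Angeli's almost-ISS theorem encapsulates, is the part I would treat most carefully.
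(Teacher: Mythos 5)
Your proposal is correct and takes essentially the same route as the paper: both proofs reduce the claim to the almost-global ISS framework of \cite{Angeli_TAC2011}, verifying its hypotheses via compactness of $SO(3)$ (boundedness under bounded inputs), Lemma~\ref{lemma1} (local exponential stability of $I_3$ and exponential instability of the undesired equilibria, whose robustness under perturbations is exactly what Angeli's result encapsulates, as you note), the bound $\|g_{ij}(\tilde R_j)\|\le 2\sqrt{2}\,|\tilde R_j|_I$, and a Lyapunov dissipation computation. The only difference is computational: the paper differentiates the unweighted distance $|\tilde R_i|_I^2$ and uses the identity $\psi(\tilde R_i)^T\psi(M_i\tilde R_i)=\psi(\tilde R_i)^T Q_i\,\psi(\tilde R_i)$ with $Q_i>0$ to obtain the ultimate-boundedness inequality directly in terms of $|\tilde R_i|_I$, whereas you differentiate the weighted $\mathcal{L}_i$ and apply Young's inequality, which serves the same purpose.
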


\begin{proof}
Since system (\ref{forced_sys}), subject to the bounded input $\sum_{j \in \mathcal{N}_i} k_{ij}g_{ij}(\tilde R_j)$, evolves on the compact manifold  $SO(3)$, condition A0, given in \cite{Angeli_TAC2011}, is fulfilled. Moreover,  according to Lemma \ref{lemma1}, conditions A1 and A2, given in \cite{Angeli_TAC2011}, are also fulfilled.\\
Now, consider the following real-valued function:
\begin{align}\label{lk}
    |\tilde{R}_i|_I^2 = \frac{1}{4}\text{tr}(I_3-\tilde{R}_i),
\end{align}
\noindent whose time-derivative, along the trajectories of \eqref{forced_sys}, is given by
\begin{equation}
    \frac{d}{dt}|\tilde R_i|_I^2=-k_R \psi(\tilde R_i)^T \left( \psi(M_i \tilde R_i)-\frac{1}{2}\sum_{j \in \mathcal{N}_i} k_{ij}g_{ij}(\tilde R_j)\right).
\end{equation}
The last equation was obtained using the following identities: $\text{tr}\left(B [x]^\times\right)=\text{tr}\left(\mathbb{P}_a(B) [x]^\times\right)$ and $\text{tr}\left([x]^\times [y]^\times\right)=-2x^Ty$, for every $x, y\in \mathbb{R}^3$ and $B \in \mathbb{R}^{3\times3}$. Moreover, since $\psi(\tilde R_i)^T \psi(M_i\tilde R_i) = \psi(\tilde R_i)^T Q_i \psi(\tilde R_i)$, with $Q_i:= \sum_{j \in \mathcal{N}_i} k_{ij}\left([b_{ij}]^\times\right)^T [b_{ij}]^\times$, one has
\begin{equation}
    \frac{d}{dt}|\tilde R_i|_I^2=-k_R \psi(\tilde R_i)^T \left(Q_i \psi(\tilde R_i)-\frac{1}{2}\sum_{j \in \mathcal{N}_i} k_{ij}g_{ij}(\tilde R_j)\right).
\end{equation}
Using the fact that $||\psi(\tilde R_i)||^2=4(1-|\tilde R_i|^2_I)|\tilde R_i|^2_I \leq 1$ and at least two bearing vectors are noncollinear, one obtains
\begin{align}\label{ISS}
    \frac{d}{dt}|\tilde R_i|_I^2&\leq -4 k_R \underline{\lambda}^{Q_i}(1-|\tilde R_i|^2_I)|\tilde R_i|^2_I\nonumber\\
    &+\frac{k_R}{2}||\sum_{j \in \mathcal{N}_i} k_{ij}g_{ij}(\tilde R_j)||\nonumber\\
    &\leq  -4 k_R\underline{\lambda}^{Q_i} |\tilde R_i|^2_I+4 k_R \underline{\lambda}^{Q_i}\nonumber\\
    &+\frac{k_R}{2}\sum_{j \in \mathcal{N}_i} k_{ij}||g_{ij}(\tilde R_j)||,
\end{align}
where $\underline{\lambda}^{Q_i}$ is the smallest eigenvalue of $Q_i$, which is positive definite under the assumption that at least two bearing vectors are noncollinear. Furthermore, using the fact that $2\sqrt{2}|R|_I=||I-R||_F$, one verifies that $||g_{ij}(\tilde R_j)|| \leq 2\sqrt{2}|\tilde R_j|_I$. Hence, it follows from \eqref{ISS} that
\begin{equation}\label{ISS2}
    \frac{d}{dt}|\tilde R_i|_I^2\leq -4 k_R\underline{\lambda}^{Q_i} |\tilde R_i|^2_I+4 k_R \underline{\lambda}^{Q_i}+\sqrt{2}k_R\sum_{j \in \mathcal{N}_i} k_{ij}|\tilde R_j|_I.
\end{equation}
It is clear that, in view of (\ref{ISS2}), system (\ref{forced_sys}) satisfies the ultimate
boundedness property defined in \cite[Proposition 3]{Angeli_TAC2011}. Consequently, as per \cite[Proposition 2]{Angeli_TAC2011}, one can conclude that system (\ref{forced_sys}) is almost globally ISS with respect to $I_3$ and inputs $\tilde R_j$.\\
\end{proof}
In the remainder of this section, we will prove almost global asymptotic stability of the equilibrium  $(\tilde R_3=I_3, \tilde R_4=I_3, \hdots, \tilde R_n=I_3)$ of the $n$-agent network governed by the cascaded dynamics \eqref{forced_sys}. Thanks to the cascaded structure of the interaction graph topology $\mathcal{G}$, as per Assumption \ref{G_ass_2l}, a mathematical induction argument, together with Lemma \ref{lemma1} and Lemma \ref{lemma2}, can be used to prove the result.\\

\noindent\textbf{\textit{Stability analysis of a network with one follower}}\\
For a network with one follower, according to \eqref{forced_sys}, one has the following attitude dynamics:
\begin{align} \label{R_3}
    \dot{\tilde{R}}_3 &= -2k_R \tilde R_3\left[\psi(M_3 \tilde R_3)\right]^\times.
\end{align}
where $M_3$ is positive semi-definite with three distinct eigenvalues. Note that $\sum_{j \in \mathcal{N}_3} k_{3j}g_{3j}(\tilde R_j)=0$ since $\mathcal{N}_3=\{1, 2\}$ and $\hat R_l=R_l$ (\ie, $\tilde R_l=I
_3$), $l\in\{1, 2\}$, as par Assumption \ref{G_ass_2l}.  It follows from Lemma \ref{lemma1} that the equilibrium point $\tilde R_3=I_3$ of system \eqref{R_3} is AGAS.\\

\noindent\textbf{\textit{Stability analysis of a network with two followers}}\\
The attitude dynamics of a network with two followers are given by
\begin{align} 
    \dot{\tilde{R}}_3 &= -2k_R \tilde R_3\left[\psi(M_3 \tilde R_3)\right]^\times \label{R_34}\\
    \dot{\tilde R}_4 &= -2k_R \tilde R_4\left[\psi(M_4 \tilde R_4)\right]^\times+k_R\tilde R_4\left[\sum_{j \in \mathcal{N}_4} k_{4j}g_{4j}(\tilde R_j)\right]^\times,\label{R_44}
\end{align}
where $M_3$ and $M_4$ are positive semi-definite with three distinct eigenvalues.
\begin{pro}\label{pro1}
 Under Assumption \ref{G_ass_2l}, the equilibrium point $(\tilde R_3=I_3, \tilde R_4=I_3)$ of system \eqref{R_34}-\eqref{R_44} is AGAS.  
\end{pro}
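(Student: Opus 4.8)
The plan is to exploit the cascade structure of \eqref{R_34}--\eqref{R_44} and to compose the AGAS property of the unforced subsystem (Lemma \ref{lemma1}) with the almost global ISS property of the forced subsystem (Lemma \ref{lemma2}), in the framework of \cite{Angeli_TAC2011}. First I would note that, since $\mathcal{N}_3=\{1,2\}$ and the leader errors are identically $\tilde R_1=\tilde R_2=I_3$ by Assumption \ref{G_ass_2l}, the $\tilde R_3$-dynamics \eqref{R_34} are autonomous and coincide with the unforced system \eqref{unforced_sys}; hence Lemma \ref{lemma1} applies verbatim and $\tilde R_3=I_3$ is AGAS. Next, among the inputs $\tilde R_j$, $j\in\mathcal{N}_4$, entering \eqref{R_44}, every leader neighbour contributes $g_{4j}(I_3)=0$, so the only possibly nontrivial input is $\tilde R_3$. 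By Lemma \ref{lemma2}, the $\tilde R_4$-subsystem is therefore almost globally ISS with respect to $I_3$ and the input $\tilde R_3$.

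For the local part, I would linearize \eqref{R_34}--\eqref{R_44} about $(I_3,I_3)$. Because the coupling term $g_{43}$ carries the factor $(\tilde R_3^T-I_3)$, it contributes, to first order, only a term linear in the $\tilde R_3$-perturbation, so the linearization is block lower-triangular. Its two diagonal blocks are the linearizations of the individual unforced dynamics about $I_3$, which are Hurwitz by Lemma \ref{lemma1}(\ref{des_equi}) (applied with $M_3$ and with $M_4$). Consequently $(I_3,I_3)$ is locally exponentially, hence asymptotically, stable.

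For the almost global part, I would use that AGAS of \eqref{R_34} confines the set of $\tilde R_3(0)$ with $\tilde R_3(t)\not\to I_3$ to the union of the stable manifolds of the equilibria in $\Upsilon\setminus\{I_3\}$, a set of measure zero; for every other $\tilde R_3(0)$ the input satisfies $g_{43}(\tilde R_3(t))\to 0$. The converging-input/converging-state behaviour guaranteed by the almost global ISS of the $\tilde R_4$-subsystem then yields $\tilde R_4(t)\to I_3$ for almost every $\tilde R_4(0)$. Since each convergence fails only off a measure-zero subset of its $SO(3)$ factor, the exceptional set in $SO(3)\times SO(3)$ is contained in a finite union of products of a measure-zero set with $SO(3)$ and is therefore itself of measure zero. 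Combined with the local asymptotic stability above, this establishes AGAS of $(I_3,I_3)$.

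The step I expect to be the main obstacle is the rigorous composition of the two \emph{almost} global properties. Unlike the genuinely global case, one must verify that the two measure-zero exceptional sets do not combine into a positive-measure bad set in the product manifold, and that the vanishing-input convergence of the $\tilde R_4$-subsystem --- which Lemma \ref{lemma2} delivers only through the ultimate-boundedness characterisation of \cite[Propositions 2--3]{Angeli_TAC2011} --- can indeed be invoked along the genuine, time-varying trajectory $t\mapsto\tilde R_3(t)$ rather than merely for a fixed asymptotically vanishing input.
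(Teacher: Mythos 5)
Your decomposition is exactly the paper's: the $\tilde R_3$-subsystem \eqref{R_34} is autonomous and AGAS at $I_3$ by Lemma \ref{lemma1} (your observation that leader neighbours contribute $g_{4j}(I_3)=0$ also subsumes the paper's case split $3\in\mathcal{N}_4$ versus $3\notin\mathcal{N}_4$), and the $\tilde R_4$-subsystem \eqref{R_44} is almost globally ISS with respect to $I_3$ and input $\tilde R_3$ by Lemma \ref{lemma2}. Where you diverge is the final composition step. The paper closes the argument in one line by invoking the cascade theorem \cite[Theorem 2]{Angeli_TAC2004}, which states precisely that an almost globally ISS system driven by an AGAS system yields an AGAS cascade. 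You instead attempt to prove this composition by hand, and that is where your proof has a genuine gap --- indeed the very obstacle you flag at the end.

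Concretely, two steps fail as written. First, the exceptional set is \emph{not} ``contained in a finite union of products of a measure-zero set with $SO(3)$'': the null set of bad initial conditions $\tilde R_4(0)$ supplied by Lemma \ref{lemma2} depends on the input trajectory, hence on $\tilde R_3(0)$. The bad set is the fibered set $\{(\tilde R_3(0),\tilde R_4(0)) : \tilde R_3(0)\notin N_3,\ \tilde R_4(0)\in N_4(\tilde R_3(0))\}$, and concluding that it is null requires establishing its measurability and then applying Fubini's theorem --- neither of which is automatic. Second, the ``converging-input/converging-state'' behaviour you invoke is not delivered by Lemma \ref{lemma2}: the almost global ISS property of \cite{Angeli_TAC2011} gives, for each \emph{fixed} input, an ultimate bound $\gamma(\|u\|_\infty)$ in terms of the supremum of the input over all time, valid for almost every initial condition. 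To upgrade this to convergence when $u(t)\to 0$, the standard trick is to restart the system at a large time $T$ so that the tail of the input is small; but at time $T$ the state $\tilde R_4(T)$ may lie in the (input-dependent) exceptional null set associated with the tail input, and nothing in Lemma \ref{lemma2} rules this out. Both difficulties are exactly what the proof of \cite[Theorem 2]{Angeli_TAC2004} resolves (using that the time-$t$ flow maps of the ODE carry null sets to null sets, together with a Fubini-type argument), so the missing ingredient in your proof is precisely that theorem: cite it, as the paper does, or reproduce its proof; your sketch does not substitute for it. Your local analysis (block lower-triangular linearization with Hurwitz diagonal blocks) is correct but is also subsumed by that result.
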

\begin{proof}
    According to Assumption \ref{G_ass_2l}, one has either $3 \in \mathcal{N}_4$ or $3 \notin \mathcal{N}_4$. In the case where $3 \notin \mathcal{N}_4$, the $\tilde R_3$-subsystem and the $\tilde R_4$-subsystem are independent, and the almost global asymptotic stability of the equilibrium point $(\tilde R_3=I_3, \tilde R_4=I_3)$ can be directly deduced from Lemma \ref{lemma1}. On the other hand, if $3 \in \mathcal{N}_4$, the two subsystems ($\tilde R_3$-subsystem and $\tilde R_4$-subsystem) are cascaded, and as such, in view of Lemma \ref{lemma1} and Lemma \ref{lemma2}, it follows that the $\tilde R_4$-subsystem is almost globally ISS with respect to $I_3$ and input $\tilde R_3$, and the $\tilde R_3$-subsystem is AGAS at $\tilde R_3=I_3$. Finally, in view of \cite[Theorem 2]{Angeli_TAC2004}, it follows that the cascaded system \eqref{R_34}-\eqref{R_44} is AGAS at $(\tilde R_3=I_3, \tilde R_4=I_3)$.\\  
\end{proof}

\noindent\textbf{\textit{Stability analysis of a network with $n-2$~followers}}\\
Consider the following attitude dynamics of a network with $n-2$~followers:
{\small
\begin{align} 
    &\dot{\tilde{R}}_3 = -2k_R \tilde R_3\left[\psi(M_3 \tilde R_3)\right]^\times \label{R_3n}\\
    &\dot{\tilde R}_4 = -2k_R \tilde R_4\left[\psi(M_4 \tilde R_4)\right]^\times+k_R\tilde R_4\left[\sum_{j \in \mathcal{N}_4} k_{4j}g_{4j}(\tilde R_j)\right]^\times\label{R_4n}\\
    &~~~~\vdots\nonumber\\
    &\dot{\tilde R}_{n-1} = -2k_R \tilde R_{n-1}\left[\psi(M_{n-1} \tilde R_{n-1})\right]^\times\nonumber\\
    &~~~~~~~~~~~~~+k_R\tilde R_{n-1}\left[\sum_{j \in \mathcal{N}_{n-1}} k_{(n-1)j}g_{(n-1)j}(\tilde R_j)\right]^\times\label{R_n1n}\\
    &\dot{\tilde R}_n = -2k_R \tilde R_n\left[\psi(M_n \tilde R_n)\right]^\times+k_R\tilde R_n\left[\sum_{j \in \mathcal{N}_n} k_{nj}g_{nj}(\tilde R_j)\right]^\times, \label{R_nn}
\end{align}}where $M_i$, $i \in \mathcal{V}_f$, is positive semi-definite with three distinct eigenvalues. Now, we can formally state the stability properties of the equilibrium point $(\tilde R_3=I_3, \tilde R_4=I_3, \hdots, \tilde R_n=I_3)$ of the $n$-agent cascaded system \eqref{R_3n}-\eqref{R_nn}.
\begin{thm}\label{thm_1}
    Under Assumption \ref{G_ass_2l}, the equilibrium point $(\tilde R_3=I_3, \tilde R_4=I_3, \hdots, \tilde R_n=I_3)$ of the $n$-agent cascaded system \eqref{R_3n}-\eqref{R_nn} is AGAS. 
\end{thm}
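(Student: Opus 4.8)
The plan is to prove the statement by mathematical induction on the follower index, exploiting the acyclic leader--follower structure guaranteed by Assumption \ref{G_ass_2l}. The key observation is that part \ref{a23} of Assumption \ref{G_ass_2l} forces $\mathcal{N}_i \subseteq \{1,2,\ldots,i-1\}$, so the dynamics of $\tilde R_i$ never depend on the attitude errors of higher-indexed agents. Consequently, the $n$-agent system \eqref{R_3n}--\eqref{R_nn} admits a strict cascade decomposition in which each newly-appended follower is driven only by previously-indexed states. The two analytical ingredients are already available: Lemma \ref{lemma1} supplies the AGAS of each unforced subsystem, while Lemma \ref{lemma2} supplies the almost global ISS of each forced subsystem with respect to its neighbor inputs.

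Concretely, for $k \in \{3,\ldots,n\}$ let $S_k$ denote the subnetwork consisting of agents $3,\ldots,k$, with state $(\tilde R_3,\ldots,\tilde R_k) \in SO(3)^{k-2}$. I would establish, by induction on $k$, that the equilibrium $(\tilde R_3 = I_3,\ldots,\tilde R_k = I_3)$ of $S_k$ is AGAS. The base case $k=3$ is exactly the one-follower analysis already carried out: since $\mathcal{N}_3 \subseteq \{1,2\}$ and $\tilde R_l = I_3$ for $l \in \{1,2\}$, the forcing term in \eqref{R_3n} vanishes, and Lemma \ref{lemma1} gives AGAS of $\tilde R_3 = I_3$.

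For the inductive step, I would assume that $S_{k-1}$ is AGAS at its identity equilibrium and then append agent $k$. Because $\mathcal{N}_k \subseteq \{1,\ldots,k-1\}$, the $\tilde R_k$-dynamics \eqref{R_nn} is forced only by the inputs $\{\tilde R_j : j \in \mathcal{N}_k \cap \mathcal{V}_f\}$, each of which is a component of the state of $S_{k-1}$; conversely, no state of $S_{k-1}$ depends on $\tilde R_k$. Hence $(S_{k-1}, \tilde R_k)$ is a genuine cascade: $S_{k-1}$ is the driving system, AGAS by the induction hypothesis, and the driven $\tilde R_k$-subsystem is almost globally ISS with respect to $I_3$ and these inputs by Lemma \ref{lemma2}. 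Invoking the almost-global cascade result \cite[Theorem 2]{Angeli_TAC2004}, precisely as in Proposition \ref{pro1}, yields that $S_k$ is AGAS. Taking $k=n$ closes the induction and establishes the theorem.

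The step I expect to require the most care is verifying that the hypotheses of \cite[Theorem 2]{Angeli_TAC2004} are met when the driving system $S_{k-1}$ is itself a multi-agent network rather than a single block: one must treat the entire product $(\tilde R_3,\ldots,\tilde R_{k-1})$ on $SO(3)^{k-3}$ as one AGAS system and confirm that its almost-global convergence, whose measure-zero exceptional set arises from the undesired equilibria of Lemma \ref{lemma1}, remains compatible with the almost-global ISS estimate \eqref{ISS2} for agent $k$. No new computation is needed beyond what Lemmas \ref{lemma1} and \ref{lemma2} already provide; the essential content is the acyclic ordering, which guarantees that the interconnection is a strict cascade with no feedback loop, so that the AGAS-plus-ISS composition theorem applies at every stage.
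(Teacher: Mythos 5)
Your proposal is correct and follows essentially the same route as the paper: an induction over the follower index that exploits the acyclic ordering $\mathcal{N}_i \subseteq \{1,\ldots,i-1\}$ to obtain a strict cascade, with Lemma \ref{lemma1} giving AGAS of the driving subsystem, Lemma \ref{lemma2} giving almost global ISS of each appended agent, and \cite[Theorem 2]{Angeli_TAC2004} composing the two at every stage. The only cosmetic difference is that you anchor the induction at the one-follower case while the paper anchors it at the two-follower case (Proposition \ref{pro1}); if anything, your explicit remark about treating the whole product state $(\tilde R_3,\ldots,\tilde R_{k-1})$ on $SO(3)^{k-3}$ as a single AGAS driving system when invoking the cascade theorem is more careful than the paper's own proof, which glosses over that point.
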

\begin{proof}
    We will prove the claimed result by induction. First, it follows from Proposition \ref{pro1} that the cascaded system \eqref{R_3n}-\eqref{R_4n} is AGAS at $(\tilde R_3=I_3, \tilde R_4=I_3)$. Second, we assume that the cascaded $(n-1)$-agent subsystem \eqref{R_3n}-\eqref{R_n1n} is AGAS at $(\tilde R_3=I_3, \tilde R_4=I_3, \hdots, \tilde R_{n-1}=I_3)$. Finally, using the proof by induction, with the fact that the $\tilde R_n$-subsystem is almost globally ISS with respect to $I_3$ and inputs from the cascaded $(n-1)$-agent subsystem \eqref{R_3n}-\eqref{R_n1n}, one can show that the $n$-agent cascaded system is AGAS at $(\tilde R_3=I_3, \tilde R_4=I_3, \hdots, \tilde R_n=I_3)$ according to \cite{Angeli_TAC2004}. This completes the proof.       
\end{proof}
\subsection{Bearing-based Distributed Pose Estimation}
Consider the distributed attitude observer \eqref{R_obs} together with the following distributed position estimation law:
\begin{align}\label{p_obs}
        \Dot{\hat{p}}_i =&-k_R \Bigg[\sum_{j \in \mathcal{N}_i} k_{ij}  \left(\hat{R}_j b_{ij}^j \times \hat{R}_i b_{ij}^i\right)\Bigg]^\times \hat{p}_i\nonumber\\
        &- k_p \sum_{j\in \mathcal{N}_i}  \hat{R}_i P_{b_{ij}^i} \hat{R}_i^T(\hat{p}_i-\hat{p}_j),
\end{align}
for each $i\in \mathcal{V}_f$, where $k_p, k_R, k_{ij} > 0$, $\hat{p}_i \in \mathbb{R}^3$ is the estimate of $p_i$, $\hat{R}_i \in SO(3)$ is the estimate of $R_i$ obtained from the distributed attitude observer \eqref{R_obs}, and $(\hat{R}_l, \hat{p}_l) = (R_l, p_l)$, $l\in \{1,2\}$. Define the position estimation error as $\tilde{p}_i:=p_i - \tilde{R}_i \hat{p}_i$. Its time derivative, in view of (\ref{forced_sys}) and (\ref{p_obs}), is given by
\begin{align}\label{p_1}
    \dot{\tilde{p}}_i =-k_p\sum_{j\in \mathcal{N}_i}P_{b_{ij}} \tilde{p}_i +k_p\sum_{j\in \mathcal{N}_i}P_{b_{ij}}(p_j-\tilde{R}_i \hat{p}_j),
\end{align}
with $i \in \mathcal{V}_f$. We have used the fact that $P_{b_{ij}}(p_i-p_j)=0$ and $P_{b^i_{ij}}=R_i^T P_{b_{ij}}R_i$ to obtain the last equality. It follows from \eqref{p_1} that
\begin{equation}\label{p_forced_sys}
    \dot{\tilde{p}}_i =-k_p\sum_{j\in \mathcal{N}_i}P_{b_{ij}} \tilde{p}_i +k_p\sum_{j\in \mathcal{N}_i}P_{b_{ij}}f_j(\tilde p_j, \tilde R_j, \tilde R_i),
\end{equation}
where $f_j(\tilde p_j, \tilde R_j, \tilde R_i):=\left((\tilde R_j-I_3)-(\tilde R_i-I_3)\right)\tilde R_j^T(p_j-\tilde p_j)+\tilde p_j$. It is clear that $f_i(\tilde p_j, \tilde R_j, \tilde R_i)=0$ for $\tilde p_j=0$ and $\tilde R_j=\tilde R_i=I_3$. Again, system \eqref{p_forced_sys} can be seen as a cascaded system, where the attitude and position estimation errors of the neighbors as well as the attitude estimation error of agent $i$ are considered as inputs to the following unforced system:
\begin{equation}\label{p_unforced_sys}
    \dot{\tilde{p}}_i =-k_p\sum_{j\in \mathcal{N}_i}P_{b_{ij}} \tilde{p}_i.
\end{equation}
Next, we study the stability of the equilibrium point $\tilde p_i=0$ of system \eqref{p_unforced_sys} and the ISS property of system \eqref{p_forced_sys}.
\begin{pro}\label{pro2}
    Consider system \eqref{p_unforced_sys} under Assumption \ref{G_ass_2l}. The equilibrium point $\tilde p_i =0$, $i \in \mathcal{V}_f$, is globally exponentially stable (GES).
\end{pro}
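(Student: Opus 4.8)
The plan is to exploit the fact that, since all positions are fixed ($\dot p_i = 0$ by hypothesis), each bearing $b_{ij} = (p_j-p_i)/||p_j-p_i||$ is a constant vector, so every projector $P_{b_{ij}}$ is constant and \eqref{p_unforced_sys} is a \emph{linear time-invariant} system $\dot{\tilde p}_i = -A_i\,\tilde p_i$ with the constant symmetric matrix $A_i := k_p \sum_{j\in\mathcal N_i} P_{b_{ij}}$. Global exponential stability of $\tilde p_i = 0$ will then follow as soon as $A_i$ is shown to be positive definite, i.e. $-A_i$ Hurwitz.

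First I would establish that $A_i$ is symmetric positive definite. Because each orthogonal projection satisfies $P_{b_{ij}}^T = P_{b_{ij}}$ and $P_{b_{ij}}^2 = P_{b_{ij}}$, one has $v^T P_{b_{ij}} v = ||P_{b_{ij}} v||^2 \ge 0$ for all $v$, so $A_i$ is positive semi-definite. To upgrade this to definiteness, suppose $v^T A_i v = 0$; then $||P_{b_{ij}} v|| = 0$, hence $P_{b_{ij}} v = 0$, for every $j\in\mathcal N_i$, which forces $v$ to be collinear with each $b_{ij}$. By Assumption \ref{G_ass_2l}, agent $i$ measures at least two non-collinear bearings, so no nonzero vector can be simultaneously collinear with all of them; thus $v = 0$ and $A_i$ is positive definite. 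This is precisely the positive definiteness already invoked for $Q_i = \sum_{j\in\mathcal N_i} k_{ij} P_{b_{ij}}$ in the proof of Lemma \ref{lemma2}, since $([b_{ij}]^\times)^T [b_{ij}]^\times = P_{b_{ij}}$. Let $\underline{\lambda}_i > 0$ denote the smallest eigenvalue of $A_i$.

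Then I would conclude GES via the quadratic Lyapunov function $V_i = \frac{1}{2}||\tilde p_i||^2$, whose derivative along \eqref{p_unforced_sys} is $\dot V_i = -\tilde p_i^T A_i \tilde p_i \le -\underline{\lambda}_i ||\tilde p_i||^2 = -2\underline{\lambda}_i V_i$; the comparison lemma then yields $||\tilde p_i(t)|| \le ||\tilde p_i(0)||\,e^{-\underline{\lambda}_i t}$ for all $t\ge 0$, which is exactly global exponential stability of $\tilde p_i = 0$. I expect the only non-routine step to be the positive-definiteness argument of the second paragraph, and its entire content is the non-collinearity condition in Assumption \ref{G_ass_2l}: without it the sum of projectors would retain a one-dimensional kernel along the common bearing direction, and exponential convergence in that direction would be lost.
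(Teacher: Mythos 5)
Your proof is correct and follows essentially the same route as the paper: the paper's proof likewise reduces the claim to positive definiteness of $\sum_{j\in\mathcal{N}_i}P_{b_{ij}}$ under the non-collinearity condition of Assumption \ref{G_ass_2l}, from which GES of the (linear, time-invariant) unforced dynamics follows immediately. The only difference is that you spell out the kernel argument for positive definiteness and the quadratic Lyapunov/comparison step, which the paper states in a single sentence.
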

\begin{proof}
 Under the assumption that at least two bearing vectors are non-collinear (Assumption \ref{G_ass_2l}), for every $i\in \mathcal{V}_f$, the matrix $\sum_{j\in \mathcal{N}_i}P_{b_{ij}}$ is positive definite, and hence, the equilibrium $\tilde{p}_i = 0$ of the unforced position error dynamics \eqref{p_unforced_sys} is GES.\\
\end{proof}
\begin{lem}\label{lemma3}
   Suppose Assumption \ref{G_ass_2l} is satisfied. Then, for every $i\in \mathcal{V}_f$,  system (\ref{p_forced_sys}) is ISS with respect to its inputs $\tilde p_j$, $\tilde R_j$ and $\tilde R_i$.
\end{lem}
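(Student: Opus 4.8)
The plan is to exploit the fact that, since the positions are fixed, the inertial bearings $b_{ij}$ are constant, so \eqref{p_forced_sys} is a \emph{linear time-invariant} system in $\tilde p_i$,
\[
\dot{\tilde p}_i = A_i\,\tilde p_i + u_i,\qquad A_i:=-k_p\sum_{j\in\mathcal N_i}P_{b_{ij}},\qquad u_i:=k_p\sum_{j\in\mathcal N_i}P_{b_{ij}}f_j(\tilde p_j,\tilde R_j,\tilde R_i),
\]
driven \emph{additively} by the inputs through $u_i$. By Proposition \ref{pro2}, $\sum_{j\in\mathcal N_i}P_{b_{ij}}$ is symmetric positive definite under the non-collinearity part of Assumption \ref{G_ass_2l}, so $A_i$ is symmetric and Hurwitz. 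Since a Hurwitz LTI system with an additive input is ISS with respect to that input, the whole argument reduces to bounding $\|u_i\|$ by a class-$\mathcal K$ function of the input magnitudes $\|\tilde p_j\|$, $|\tilde R_j|_I$ and $|\tilde R_i|_I$.

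First I would take the quadratic Lyapunov function $V_i=\tfrac12\|\tilde p_i\|^2$ and compute, along \eqref{p_forced_sys},
\[
\dot V_i = \tilde p_i^{\top}A_i\,\tilde p_i + \tilde p_i^{\top}u_i \;\le\; -k_p\,\underline{\lambda}^{P_i}\,\|\tilde p_i\|^2 + \|\tilde p_i\|\,\|u_i\|,
\]
where $\underline{\lambda}^{P_i}>0$ is the smallest eigenvalue of $\sum_{j\in\mathcal N_i}P_{b_{ij}}$. Because each $P_{b_{ij}}$ is an orthogonal projection and is therefore non-expansive, one has $\|P_{b_{ij}}f_j\|\le\|f_j\|$, giving $\|u_i\|\le k_p\sum_{j\in\mathcal N_i}\|f_j\|$.

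The crux is bounding $\|f_j\|$. Rewriting $f_j=(\tilde R_j-\tilde R_i)\tilde R_j^{\top}(p_j-\tilde p_j)+\tilde p_j$ and using that $\tilde R_j^{\top}$ preserves the Euclidean norm together with $\|Mw\|\le\|M\|_F\|w\|$, I obtain
\[
\|f_j\| \le \|\tilde R_j-\tilde R_i\|_F\big(\|p_j\|+\|\tilde p_j\|\big)+\|\tilde p_j\|.
\]
The attitude factor is controlled by the identity $2\sqrt2\,|R|_I=\|I_3-R\|_F$ quoted in the excerpt, which yields $\|\tilde R_j-\tilde R_i\|_F\le 2\sqrt2\,(|\tilde R_j|_I+|\tilde R_i|_I)$, and $\|p_j\|$ is a fixed constant. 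The only delicate point is the bilinear cross-term $\|\tilde R_j-\tilde R_i\|_F\,\|\tilde p_j\|$, which couples an attitude input with the position input $\tilde p_j$; I would dominate it using the a-priori bound $\|\tilde R_j-\tilde R_i\|_F\le 2\sqrt3$ (valid since both factors lie on $SO(3)$), turning it into a term linear in $\|\tilde p_j\|$. Collecting the three contributions gives an affine, origin-vanishing bound of the form $\|f_j\|\le 2\sqrt2\,\|p_j\|\,(|\tilde R_j|_I+|\tilde R_i|_I)+(2\sqrt3+1)\,\|\tilde p_j\|$, hence $\|u_i\|\le\gamma(\cdot)$ with $\gamma$ a class-$\mathcal K$ function of the aggregate input magnitude that vanishes when all inputs vanish.

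Finally, I would close the ISS-Lyapunov argument in the standard way: whenever $\|\tilde p_i\|\ge \tfrac{\theta}{k_p\underline{\lambda}^{P_i}}\,\|u_i\|$ for some fixed $\theta>1$, the estimate above gives $\dot V_i\le -(1-\tfrac1\theta)\,k_p\underline{\lambda}^{P_i}\|\tilde p_i\|^2<0$, so $\|\tilde p_i\|$ decreases outside a ball whose radius is a class-$\mathcal K$ function of the inputs. This is precisely an ISS-Lyapunov condition, establishing that \eqref{p_forced_sys} is ISS with respect to $\tilde p_j$, $\tilde R_j$ and $\tilde R_i$. I expect the main obstacle to be the careful treatment of the bilinear term $(\tilde R_j-\tilde R_i)\tilde R_j^{\top}\tilde p_j$, since it mixes the attitude and position inputs; everything else reduces to routine Hurwitz-LTI ISS estimates already enabled by Proposition \ref{pro2}.
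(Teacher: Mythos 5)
Your proposal is correct and follows essentially the same route as the paper: the same quadratic Lyapunov function $V_i=\tfrac12\|\tilde p_i\|^2$, the same use of the positive definiteness of $\sum_{j\in\mathcal N_i}P_{b_{ij}}$ guaranteed by Assumption \ref{G_ass_2l}, and the same bounding of the interconnection term $f_j$ through the identity $2\sqrt2\,|R|_I=\|I_3-R\|_F$ together with boundedness of the rotation-dependent factors. The only (immaterial) difference is the closing step: the paper applies Young's inequality to reach a dissipation-form ISS inequality $\dot V_i\le-\alpha_1(\|\tilde p_i\|)+\sum_{j\in\mathcal N_i}\bigl(\alpha_2(\|\tilde p_j\|)+\alpha_3(|\tilde R_j|_I)+\alpha_4(|\tilde R_i|_I)\bigr)$, whereas you invoke the equivalent gain-margin ISS-Lyapunov characterization.
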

\begin{proof}
Consider the following Lyapunov function candidate:
\begin{align}\label{vpk}
    V_i = \frac{1}{2} \tilde{p}_i^T \tilde{p}_i,
\end{align}
whose time-derivative, along the trajectories of (\ref{p_forced_sys}), is given by
\begin{align}
    \dot{V}_i=\tilde p_i^T\left(-k_p\sum_{j\in \mathcal{N}_i}P_{b_{ij}} \tilde{p}_i +k_p\sum_{j\in \mathcal{N}_i}P_{b_{ij}}f_j(\tilde p_j, \tilde R_j, \tilde R_i)\right).\nonumber
\end{align}
Since $||Ax||\leq ||A||_F ||x||$, for every $A \in \mathbb{R}^{3\times3}$ and $x \in \mathbb{R}^3$, and in view of the positive definitness of the matrix $\sum_{j\in \mathcal{N}_i}P_{b_{ij}}$ (implied from Assumption \ref{G_ass_2l}), one has
{\small
\begin{align}\label{v_i}
    \dot{V}_i \leq -k_p \underline{\lambda}_i^{P}||\tilde p_i||^2+k_p \Bar{P} \sum_{j\in \mathcal{N}_i} ||\tilde p_i||~||f_j(\tilde p_j, \tilde R_j, \tilde R_i)||,
\end{align}
}
where $\underline{\lambda}_i^{P}$ denotes the smallest eigenvalue of the matrix $\sum_{j\in \mathcal{N}_i}P_{b_{ij}}$ and $\Bar{P}$ denotes the upper bound of the projection matrix norm, \ie, $||P_{b_{ij}}||_F\leq \Bar{P}$. Applying Young’s inequality on the last two terms of \eqref{v_i}, leads to
{\small
\begin{align}
    \dot{V}_i \leq &-k_p \underline{\lambda}_i^{P}||\tilde p_i||^2+k_p \Bar{P} \sum_{j\in \mathcal{N}_i}\Big(\xi_i||\tilde p_i||^2\nonumber\\
    &\hspace{3cm}+\frac{1}{4 \xi_i}||f_j(\tilde p_j, \tilde R_j, \tilde R_i)||^2\Big)\\
    \leq &-k_p \big(\underline{\lambda}_i^{P}-\Bar{P}\xi_i|\mathcal{N}_i|\big)||\tilde p_i||^2+\frac{k_p \Bar{P} }{4 \xi_i}\sum_{j\in \mathcal{N}_i}||f_j(\tilde p_j, \tilde R_j, \tilde R_i)||^2. 
\end{align}}Choosing $0<\xi_i<\frac{\underline{\lambda}_i^{P}}{\Bar{P}|\mathcal{N}_i|}$, for every $i \in \mathcal{V}_f$, and using the fact that $8|R|_I^2=||I-R||^2_F$ and $|R|_I\leq 1$, for every $R\in SO(3)$, one can show that 
{\small
\begin{align}\label{p_ISS}
    \dot{V}_i \leq &-\alpha_1(||\tilde p_i||)+\sum_{j\in \mathcal{N}_i}\Big(\alpha_2(||\tilde p_j||)+\alpha_3(|\tilde R_j|_I)+\alpha_4(|\tilde R_i|_I)\Big), 
\end{align}}where $\alpha_k(.)\in \mathcal{K}_\infty$ with $k\in \{1, 2, 3, 4\}$. It follows from \eqref{p_ISS} that system \eqref{p_forced_sys} is ISS with respect to inputs $\tilde p_j$, $\tilde R_j$ and $\tilde R_i$.\\
\end{proof}
 
\noindent Thanks again to the cascaded nature of the interaction graph topology $\mathcal{G}$ which allows the use of a similar induction proof as in the previous section, together with Proposition \ref{pro2} and Lemma \ref{lemma3} as well as a result from Theorem \ref{thm_1}, to establish the stability property of the equilibrium point $(\tilde R_3=I_3, \tilde R_4=I_3, \hdots, \tilde R_n=I_3, \tilde p_3=0, \tilde p_4=0, \hdots, \tilde p_n=0)$ of the $n$-agent network governed by the cascaded dynamics \eqref{forced_sys} and \eqref{p_forced_sys}.\\

\noindent\textbf{\textit{Stability analysis of a network with one follower}}\\
Consider the attitude error dynamics \eqref{R_3} cascaded with the following position error dynamics:
\begin{equation}\label{p_3}
    \dot{\tilde{p}}_3 =-k_p\sum_{j\in \mathcal{N}_3}P_{b_{3j}} \tilde{p}_3 +k_p\sum_{j\in \mathcal{N}_3}P_{b_{3j}}f_j(\tilde p_j, \tilde R_j, \tilde R_3),
\end{equation}
where $f_j(\tilde p_j, \tilde R_j, \tilde R_3)=-(\tilde R_3-I_3) p_j$ since $(\hat{R}_l, \hat{p}_l) = (R_l, p_l)$, $l\in \{1,2\}$, as per Assumption \ref{G_ass_2l}.
\begin{pro}\label{pro3}
    Suppose Assumption \ref{G_ass_2l} is satisfied. Then, the cascaded system \eqref{R_3} and \eqref{p_3} is AGAS at $(\tilde R_3=I_3, \tilde p_3=0)$.
\end{pro}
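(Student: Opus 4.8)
The plan is to exploit the cascade structure of \eqref{R_3}--\eqref{p_3}: the attitude error $\tilde R_3$ evolves autonomously according to \eqref{R_3}, and it enters the position error dynamics \eqref{p_3} only through the driving term $f_j(\tilde p_j, \tilde R_j, \tilde R_3) = -(\tilde R_3 - I_3)p_j$. First I would invoke Lemma \ref{lemma1} to assert that the driving subsystem \eqref{R_3} is AGAS at $\tilde R_3 = I_3$; this is precisely the single-follower attitude situation already established in the discussion preceding \eqref{R_3}.

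Next I would specialize Lemma \ref{lemma3} to $i = 3$. Since $\mathcal{N}_3 = \{1, 2\}$ consists solely of the leaders, Assumption \ref{G_ass_2l} gives $\tilde R_l = I_3$ and $\tilde p_l = 0$ for $l \in \{1, 2\}$, so the only surviving input to \eqref{p_3} is $\tilde R_3$. Lemma \ref{lemma3} then guarantees that \eqref{p_3} is ISS with respect to the input $\tilde R_3$, and by construction the driving term $f_j$ vanishes when $\tilde R_3 = I_3$, so that $\tilde p_3 = 0$ is the ISS equilibrium associated with zero input. Note also that $f_j$ is bounded, since $\tilde R_3 \in SO(3)$ is bounded and each $p_j$ is a fixed constant.

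With these two ingredients---an AGAS driving subsystem on $SO(3)$ and an ISS driven subsystem on $\mathbb{R}^3$---the conclusion would follow from the cascade interconnection result \cite[Theorem 2]{Angeli_TAC2004}, exactly as in the proof of Proposition \ref{pro1}. The ISS estimate of Lemma \ref{lemma3} rules out finite escape time and ensures boundedness of $\tilde p_3$ for all initializations, while the almost-global convergence $\tilde R_3 \to I_3$ drives the input $f_j$ to zero and hence forces $\tilde p_3 \to 0$ from almost every initial condition.

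The main obstacle I anticipate is bookkeeping the \emph{almost} in AGAS across the product state space $SO(3) \times \mathbb{R}^3$. The exceptional initializations are precisely those for which $\tilde R_3(0)$ lies in the measure-zero set of undesired equilibria $\Upsilon \setminus \{I_3\}$ identified in Lemma \ref{lemma1}; one must verify that the corresponding exceptional set $(\Upsilon \setminus \{I_3\}) \times \mathbb{R}^3$ remains of measure zero in the cascade, and that the hypotheses of \cite[Theorem 2]{Angeli_TAC2004} genuinely accommodate a driving system evolving on a compact manifold coupled to a Euclidean ISS system. Confirming that the cascade theorem applies verbatim in this mixed manifold/Euclidean setting, rather than re-deriving the convergence by hand via a Lyapunov argument, is the delicate point.
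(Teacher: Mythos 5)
Your proof is correct and follows essentially the same route as the paper: AGAS of the attitude subsystem via Lemma \ref{lemma1}, ISS of the position error dynamics \eqref{p_3} with respect to the single surviving input $\tilde R_3$ via Lemma \ref{lemma3} (with the leader errors vanishing by Assumption \ref{G_ass_2l}), and the AGAS-plus-ISS cascade argument of \cite[Theorem 2]{Angeli_TAC2004}. The only cosmetic difference is that the paper also cites Proposition \ref{pro2} for GES of the unforced position dynamics, which your argument subsumes through the ISS property, and it leaves the appeal to the cascade theorem implicit rather than citing it explicitly as you do.
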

\begin{proof}
    From Lemma \ref{lemma1} and Proposition \ref{pro2}, one can conclude that the $\tilde R_3$-subsystem is AGAS at $\tilde R_3=I_3$ and the $\tilde p_3$-subsystem, with $\tilde R_3=I_3$, is GES at $\tilde p_3=0$. Thus, as per Lemma \ref{lemma3}, the cascaded system \eqref{R_3} and \eqref{p_3} is AGAS at $(\tilde R_3=I_3, \tilde p_3=0)$.\\   
\end{proof}
\noindent\textbf{\textit{Stability analysis of a network with two followers}}\\
Consider the attitude error dynamics \eqref{R_34}-\eqref{R_44} cascaded with the following position error dynamics:
\begin{align}
    \dot{\tilde{p}}_3 &=-k_p\sum_{j\in \mathcal{N}_3}P_{b_{3j}} \tilde{p}_3 +k_p\sum_{j\in \mathcal{N}_3}P_{b_{3j}}f_j(\tilde p_j, \tilde R_j, \tilde R_3) \label{p_34}\\
    \dot{\tilde{p}}_4 &=-k_p\sum_{j\in \mathcal{N}_4}P_{b_{4j}} \tilde{p}_4 +k_p\sum_{j\in \mathcal{N}_4}P_{b_{4j}}f_j(\tilde p_j, \tilde R_j, \tilde R_4). \label{p_44}
\end{align}
\begin{pro}\label{pro4}
    Suppose Assumption \ref{G_ass_2l} is satisfied. Then, the cascaded system \eqref{R_34}-\eqref{R_44} and \eqref{p_34}-\eqref{p_44} is AGAS at $(\tilde R_3=I_3, \tilde R_4=I_3, \tilde p_3=0, \tilde p_4=0)$.
\end{pro}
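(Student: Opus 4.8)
The plan is to exploit the triangular cascade structure of the coupled attitude--position error dynamics \eqref{R_34}--\eqref{R_44} and \eqref{p_34}--\eqref{p_44}, and to conclude by the same almost global cascade argument of \cite{Angeli_TAC2004} used throughout this section. The crucial bookkeeping observation is that, since the only followers are agents $3$ and $4$ and the leaders satisfy $\tilde R_l = I_3$, $\tilde p_l = 0$ for $l\in\{1,2\}$, every signal driving a follower subsystem is either a leader error (which vanishes identically) or one of $\tilde R_3$, $\tilde R_4$, $\tilde p_3$. Moreover, Assumption \ref{G_ass_2l} imposes $\mathcal{N}_i\subseteq\{1,\dots,i-1\}$, so no edge runs from agent $4$ back into agent $3$; consequently $\tilde p_4$ sits strictly at the top of the cascade and never feeds back into the remaining states.

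First I would invoke Proposition \ref{pro1} to assert that the attitude block \eqref{R_34}--\eqref{R_44} is AGAS at $(\tilde R_3 = I_3,\, \tilde R_4 = I_3)$. Next I would append the $\tilde p_3$-subsystem \eqref{p_34}: its only neighbors are the leaders, so after discarding the vanishing leader terms its input reduces to $\tilde R_3$, and by Lemma \ref{lemma3} it is ISS with respect to that input. Treating $(\tilde R_3,\tilde R_4)$ as an AGAS driving system and $\tilde p_3$ as an ISS driven system, one application of \cite[Theorem 2]{Angeli_TAC2004} shows that the aggregate $(\tilde R_3,\tilde R_4,\tilde p_3)$ is AGAS at $(I_3, I_3, 0)$.

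The final step treats $\tilde p_4$ as the top layer. By Lemma \ref{lemma3}, the $\tilde p_4$-subsystem \eqref{p_44} is ISS with respect to its inputs, which—again discarding the vanishing leader contributions through $f_j$—are contained in $\{\tilde p_3,\tilde R_3,\tilde R_4\}$, \ie precisely the state of the AGAS driving system $(\tilde R_3,\tilde R_4,\tilde p_3)$ established above. A second application of \cite[Theorem 2]{Angeli_TAC2004} to this cascade then yields almost global asymptotic stability of the full equilibrium $(\tilde R_3 = I_3,\, \tilde R_4 = I_3,\, \tilde p_3 = 0,\, \tilde p_4 = 0)$. Note that the case distinction $3\in\mathcal{N}_4$ versus $3\notin\mathcal{N}_4$ used in Proposition \ref{pro1} need not be repeated here, since the cascade argument is insensitive to whether $\tilde p_3$ actually appears in the $\tilde p_4$ dynamics.

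The step I expect to demand the most care is not the analysis but the consistency of the cascade decomposition: I must confirm that the chosen driving system $(\tilde R_3,\tilde R_4,\tilde p_3)$ is genuinely autonomous with respect to $\tilde p_4$, which rests on the acyclic, lower-triangular neighbor structure of Assumption \ref{G_ass_2l}. A secondary subtlety is that the \emph{almost global} qualifier is inherited solely from the attitude layer, whose undesired equilibria $\Upsilon\setminus\{I_3\}$ constitute the measure-zero exceptional set; the position subsystems are globally ISS and GES (Proposition \ref{pro2} and Lemma \ref{lemma3}), so composing them preserves rather than enlarges this exceptional set, and the overall scheme remains AGAS.
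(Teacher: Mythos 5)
Your proposal is correct and follows essentially the same route as the paper: Proposition~\ref{pro1} for the attitude block, ISS of the position error dynamics from Lemma~\ref{lemma3}, and the almost-global cascade result of \cite[Theorem 2]{Angeli_TAC2004}. The one organizational difference is that the paper invokes Proposition~\ref{pro3} to assert that $(\tilde R_3=I_3,\tilde p_3=0)$ is AGAS and then combines this (implicitly) with Proposition~\ref{pro1} to obtain the driving system for $\tilde p_4$, whereas you construct the driving block $(\tilde R_3,\tilde R_4,\tilde p_3)$ explicitly through a first application of \cite[Theorem 2]{Angeli_TAC2004} (attitude pair AGAS, $\tilde p_3$ ISS with respect to $\tilde R_3$) before appending $\tilde p_4$ with a second application. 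Your version spells out the step the paper leaves tacit --- why two AGAS statements sharing the state $\tilde R_3$ yield an AGAS joint system --- which is a slightly more self-contained justification, at the cost of not reusing Proposition~\ref{pro3}; either way the final step, ISS of $\tilde p_4$ with respect to $\{\tilde p_3,\tilde R_3,\tilde R_4\}$ plus the cascade theorem, is identical.
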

\begin{proof}
    It follows from Propositions \ref{pro1} and \ref{pro3} that the equilibrium point $(\tilde R_3 =I_3, \tilde R_4=I_3)$ of \eqref{R_34}-\eqref{R_44} is AGAS, and the equilibrium point $(\tilde R_3 =I_3, \tilde p_3=0)$ of \eqref{R_34} and \eqref{p_34} is AGAS. Therefore, using the fact that the $\tilde p_4$-subsystem is ISS with respect to $\tilde p_3$, $\tilde R_3$, and $\tilde R_4$, one can establish the claim in Proposition \ref{pro4}.\\ 
\end{proof}

\noindent\textbf{\textit{Stability analysis of a network with $n-2$~followers}}\\
For a network with $n-2$~followers, we consider the attitude error dynamics \eqref{R_3n}-\eqref{R_nn} cascaded with the following position error dynamics:
\begin{align}
    &\dot{\tilde{p}}_3 =-k_p\sum_{j\in \mathcal{N}_3}P_{b_{3j}} \tilde{p}_3 +k_p\sum_{j\in \mathcal{N}_3}P_{b_{3j}}f_j(\tilde p_j, \tilde R_j, \tilde R_3) \label{p_3n}\\
    &\dot{\tilde{p}}_4 =-k_p\sum_{j\in \mathcal{N}_4}P_{b_{4j}} \tilde{p}_4 +k_p\sum_{j\in \mathcal{N}_4}P_{b_{4j}}f_j(\tilde p_j, \tilde R_j, \tilde R_4) \label{p_4n}\\ 
    &~~~~\vdots\nonumber\\
    &\dot{\tilde{p}}_{n-1} =-k_p\sum_{j\in \mathcal{N}_{n-1}}P_{b_{(n-1)j}} \tilde{p}_{n-1} \nonumber\\
    &~~~~~~~~~~~+k_p\sum_{j\in \mathcal{N}_{n-1}}P_{b_{(n-1)j}}f_j(\tilde p_j, \tilde R_j, \tilde R_{n-1}) \label{p_n1n}\\ 
    &\dot{\tilde{p}}_n =-k_p\sum_{j\in \mathcal{N}_n}P_{b_{nj}} \tilde{p}_n +k_p\sum_{j\in \mathcal{N}_n}P_{b_{nj}}f_j(\tilde p_j, \tilde R_j, \tilde R_n). \label{p_nn}
\end{align}
In the following theorem, we establish the stability properties of the overall rotational and transnational cascaded estimation scheme.
\begin{thm}\label{thm_2}
    Considering the cascaded attitude and position estimation schemes given by \eqref{R_obs} and \eqref{p_obs}, respectively, where Assumption \ref{G_ass_2l} is satisfied. Suppose that the result in Theorem \ref{thm_1} holds. Then, the equilibrium point $(\tilde R_3=I_3, \tilde R_4=I_3, \hdots, \tilde R_n=I_3, \tilde p_3=0, \tilde p_4=0, \hdots, \tilde p_n=0)$ of the overall cascaded system \eqref{R_3n}-\eqref{R_nn} and \eqref{p_3n}-\eqref{p_nn} is AGAS.
\end{thm}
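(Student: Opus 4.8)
The plan is to mirror the induction used in the proof of Theorem \ref{thm_1}, but now carrying the position errors $\tilde p_i$ alongside the attitude errors $\tilde R_i$. The base case is already settled by Proposition \ref{pro4}, which establishes that the coupled attitude--position system for the two-follower network \eqref{R_34}--\eqref{R_44}, \eqref{p_34}--\eqref{p_44} is AGAS at $(\tilde R_3=I_3,\tilde R_4=I_3,\tilde p_3=0,\tilde p_4=0)$. For the induction step, I would assume that the coupled $(n-1)$-agent subsystem, consisting of the attitude dynamics \eqref{R_3n}--\eqref{R_n1n} together with the position dynamics \eqref{p_3n}--\eqref{p_n1n} for followers $3,\dots,n-1$, is AGAS at its identity--origin equilibrium, and then append the full state $(\tilde R_n,\tilde p_n)$ of agent $n$.

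The central observation is the cascaded structure exploited throughout the paper: by Assumption \ref{G_ass_2l}, $\mathcal N_n\subseteq\{1,\dots,n-1\}$, so the dynamics of agent $n$ are driven only by the states $\{\tilde R_j,\tilde p_j:j\in\mathcal N_n\}$ of lower-indexed agents, together with its own attitude error $\tilde R_n$. Thus the $n$-agent system splits into the AGAS $(n-1)$-agent subsystem in the driving position and the agent-$n$ block $(\tilde R_n,\tilde p_n)$ in the driven position. It remains to verify that this agent-$n$ block is \emph{almost globally ISS} with respect to the inputs supplied by the $(n-1)$-agent subsystem; the overall AGAS conclusion then follows from the cascade result \cite[Theorem 2]{Angeli_TAC2004}.

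To establish the almost global ISS of the agent-$n$ block, I would note that the block is \emph{itself} a cascade: the $\tilde R_n$-subsystem \eqref{R_nn} is almost globally ISS with respect to the neighbor attitude errors $\tilde R_j$ by Lemma \ref{lemma2}, while the $\tilde p_n$-subsystem \eqref{p_nn} is ISS with respect to $\tilde p_j$, $\tilde R_j$ ($j\in\mathcal N_n$) and its own $\tilde R_n$ by Lemma \ref{lemma3}, with global exponential stability of its unforced dynamics guaranteed by Proposition \ref{pro2}. Because $\tilde R_n$ enters $\tilde p_n$ as an input, the internal interconnection $\tilde R_n\to\tilde p_n$ is again a cascade of an almost globally ISS system into a globally ISS system; composing the two ISS estimates shows that the pair $(\tilde R_n,\tilde p_n)$ is almost globally ISS with respect to the external inputs $\{\tilde R_j,\tilde p_j:j\in\mathcal N_n\}$, with the measure-zero exceptional set inherited solely from the undesired attitude equilibria $\Upsilon\setminus\{I_3\}$ of the $\tilde R_n$ dynamics.

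The main obstacle I anticipate is the bookkeeping required to compose these stability notions correctly on the product manifold $SO(3)^{n-2}\times\mathbb R^{3(n-2)}$: one must verify that the regularity and compactness conditions (A0--A2 of \cite{Angeli_TAC2011}) continue to hold for the enlarged interconnection, that the global ISS of the position layer and the almost global ISS of the attitude layer combine to yield almost global rather than merely local ISS for the agent-$n$ block, and that the exceptional set remains of measure zero after composition. Once the agent-$n$ block is certified almost globally ISS, invoking \cite[Theorem 2]{Angeli_TAC2004} on the $(n-1)$-agent/agent-$n$ cascade completes the induction and, together with the base case of Proposition \ref{pro4}, yields AGAS of the full equilibrium point.
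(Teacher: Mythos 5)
Your induction mirrors the paper's template, but your cascade decomposition in the induction step is genuinely different from the paper's, and it is precisely where your plan has a gap. The paper splits the $n$-agent system as [everything except $\tilde p_n$] driving [$\tilde p_n$]: the driving part, consisting of the coupled $(n-1)$-agent subsystem together with \emph{all} attitude errors, is AGAS by the induction hypothesis combined with Theorem \ref{thm_1} --- which the theorem statement explicitly supplies as a hypothesis ("Suppose that the result in Theorem \ref{thm_1} holds") --- and the driven part is only the $\tilde p_n$-subsystem, for which the plain ISS property of Lemma \ref{lemma3} suffices to close the induction via the standard AGAS-plus-ISS cascade argument. You instead split as [$(n-1)$-agent coupled subsystem] driving the mixed block $(\tilde R_n,\tilde p_n)$, never using the hypothesis on Theorem \ref{thm_1}, and you must therefore certify that this block is \emph{almost globally} ISS with respect to the neighbor inputs.

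That certification is the gap, and you only name it as an obstacle rather than resolve it. The claim that the cascade of an almost globally ISS system (\eqref{R_nn}, Lemma \ref{lemma2}) into an ISS system (\eqref{p_nn}, Lemma \ref{lemma3}) is itself almost globally ISS as a block is not available off the shelf: \cite[Theorem 2]{Angeli_TAC2004} concerns cascades whose driving system has no external input (AGAS driving an almost globally ISS system yields AGAS), not the composition of ISS-type properties in the presence of exogenous inputs, and the ultimate-boundedness route of \cite{Angeli_TAC2011} used in Lemma \ref{lemma2} was verified on the compact manifold $SO(3)$, whereas your block evolves on the non-compact product $SO(3)\times\mathbb{R}^3$. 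Note also that the unforced block's equilibria are not of the form $\Upsilon\times\{0\}$: at an undesired attitude equilibrium $\tilde R_n=\mathcal{R}_\alpha(\pi,v_n)$, setting the external inputs to zero forces $\tilde p_n=\big(\sum_{j\in\mathcal{N}_n}P_{b_{nj}}\big)^{-1}\sum_{j\in\mathcal{N}_n}P_{b_{nj}}(I_3-\tilde R_n)p_j\neq 0$, so the isolated-equilibria and instability conditions of \cite{Angeli_TAC2011} would have to be re-verified at these shifted equilibria. The composition you want is in fact provable (local ISS of a cascade of locally ISS systems, plus the asymptotic-gain form of ISS applied to $\tilde p_n$ with the bounded input $\tilde R_n$, the exceptional set being the measure-zero bad set of $\tilde R_n(0)$ times $\mathbb{R}^3$), but that is a lemma you would need to state and prove yourself; the paper's decomposition exists precisely to avoid this work, since with Theorem \ref{thm_1} taken as given the only driven system is $\tilde p_n$ and plain ISS suffices. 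A minor further difference: the paper's base case is the one-follower result (Proposition \ref{pro3}); starting from Proposition \ref{pro4} is harmless but unnecessary.
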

\begin{proof}
    Similar to the proof of Theorem \ref{thm_1}, we will prove the result of this theorem by induction. First, from Proposition \ref{pro3}, it is clear that the equilibrium point $(\tilde R_3=I_3, \tilde p_3=0)$ of \eqref{R_3n} and \eqref{p_3n} is AGAS. Second, we assume that the cascaded $(n-1)$-agent subsystem \eqref{R_3n}-\eqref{R_n1n} and \eqref{p_3n}-\eqref{p_n1n} is AGAS at $(\tilde R_3=I_3, \tilde R_4=I_3, \hdots, \tilde R_{n-1}=I_3, \tilde p_3=0, \tilde p_4=0, \hdots, \tilde p_{n-1}=0)$. Finally, using the induction arguments, in view of the result from Theorem \ref{thm_1} and the fact that the $\tilde p_n$-subsystem is ISS with respect to $\tilde R_n$ as well as inputs from the cascaded $(n-1)$-agent subsystem \eqref{R_3n}-\eqref{R_n1n} and \eqref{p_3n}-\eqref{p_n1n}, one can show that the equilibrium point $(\tilde R_3=I_3, \tilde R_4=I_3, \hdots, \tilde R_n=I_3, \tilde p_3=0, \tilde p_4=0, \hdots, \tilde p_n=0)$ of the overall cascaded system \eqref{R_3n}-\eqref{R_nn} and \eqref{p_3n}-\eqref{p_nn} is AGAS. This completes the proof.            
\end{proof}

\section{SIMULATION}\label{s5}
In this section, we present some numerical simulations to illustrate the performance of our proposed distributed pose estimation scheme. We consider an eight-agent system in a $3$-dimensional space with the following positions: $p_1=[0~0~0]^T$, $p_2=[2~0~0]^T$, $p_3=[2~2~0]^T$, $p_4=[0~2~0]^T$, $p_5=[0~0~2]^T$, $p_6=[2~0~2]^T$, $p_7=[2~2~2]^T$ and $p_8=[0~2~2]^T$. The rotational subsystem is driven by the following angular velocities: $\omega_1=[1 -2~1]^T$, $\omega_2(t)=[-\cos{3t}~1~\sin{2t}]^T$, $\omega_3(t)=[-\cos{t}~1~\sin{2t}]^T$, $\omega_4(t)=[-\cos{2t}~1~\sin{5t}]^T$, $\omega_5(t)=[-\cos{5t}~1~\sin{9t}]^T$, $\omega_6(t)=[-\cos{2t}~\sin{9t}~1]^T$, $\omega_7(t)=[-\cos{4t}~1~2]^T$ and $\omega_8(t)=[-2~1~\sin{9t}]^T$. The initial rotations of all agents are chosen to be the identity.\\
We use a directed graph with a leader-follower structure (see Assumption \ref{G_ass_2l}) to model the interaction graph of the eight-agent system as it is shown in Figure \ref{graph}. Accordingly, the neighbors sets of the agents are given as $\mathcal{N}_1=\mathcal{N}_2 = \{\varnothing\}$, $\mathcal{N}_3 = \{1, 2\}$, $\mathcal{N}_4 = \{2, 3\}$, $\mathcal{N}_5 = \{1, 4\}$, $\mathcal{N}_6 = \{2, 4, 5\}$, $\mathcal{N}_7 = \{3, 4, 6\}$ and $\mathcal{N}_8 = \{1, 7\}$. The initial conditions of our proposed estimation scheme are chosen as: $\hat{p}_3(0)=[-2~0~-1]^T$, $\hat{p}_4(0)=[-1~2~2]^T$, $\hat{p}_5(0)=[-2~2~4]^T$, $\hat{p}_6(0)=[0~0~0]^T$, $\hat{p}_7(0)=[-4~0~1]^T$, $\hat{p}_8(0)=[-3~\frac{1}{2}~2]^T$, $\hat{R}_3(0)=\mathcal{R}_\alpha(0.1\pi,v)$, $\hat{R}_4(0)=\mathcal{R}_\alpha(0.2\pi,v)$, $\hat{R}_5(0)=\mathcal{R}_\alpha(0.3\pi,v)$, $\hat{R}_6(0)=\mathcal{R}_\alpha(0.9\pi,v)$, $\hat{R}_7(0)=\mathcal{R}_\alpha(0.4\pi,v)$ and $\hat{R}_8(0)=\mathcal{R}_\alpha(0.5\pi,v)$ with $v=[1~0~0]^T$. The gain parameters are taken as follows: $k_p=1$, $k_R= 1$ and $k_{ij}= 1$ for every $(i, j)\in \mathcal{E}$. The time evolution of the average attitude and position estimation error norms are provided in Figure \ref{rotation_leader_follower} and Figure \ref{translation_leader_follower}, respectively.
\begin{figure}[h]
    \centering
\includegraphics[width=0.6\linewidth]{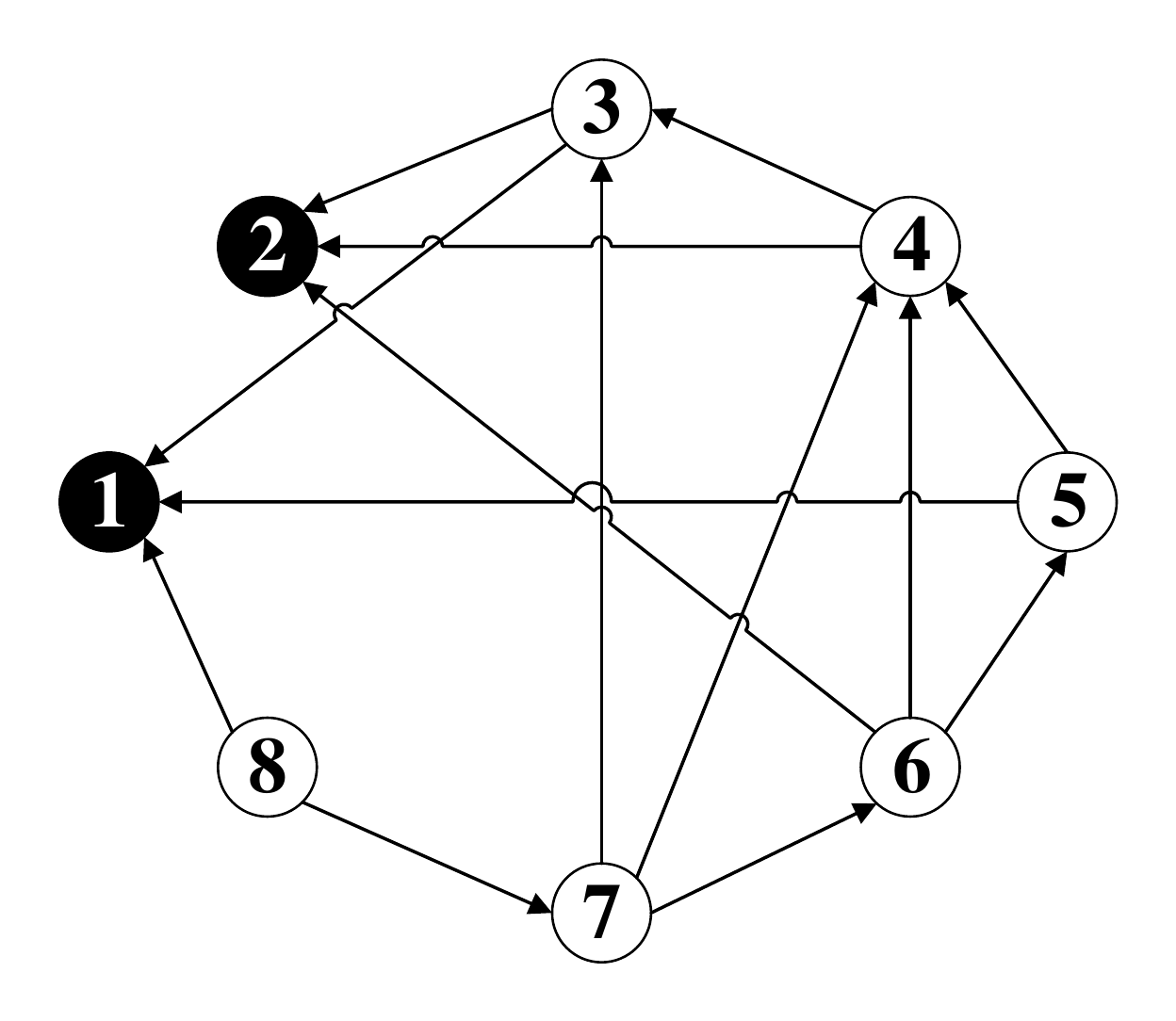}
    \caption{The interaction graph (the black circles represent the leaders)}.
    \label{graph}
\end{figure}
\begin{figure}[h]
    \centering
    \includegraphics[width=0.9\linewidth]{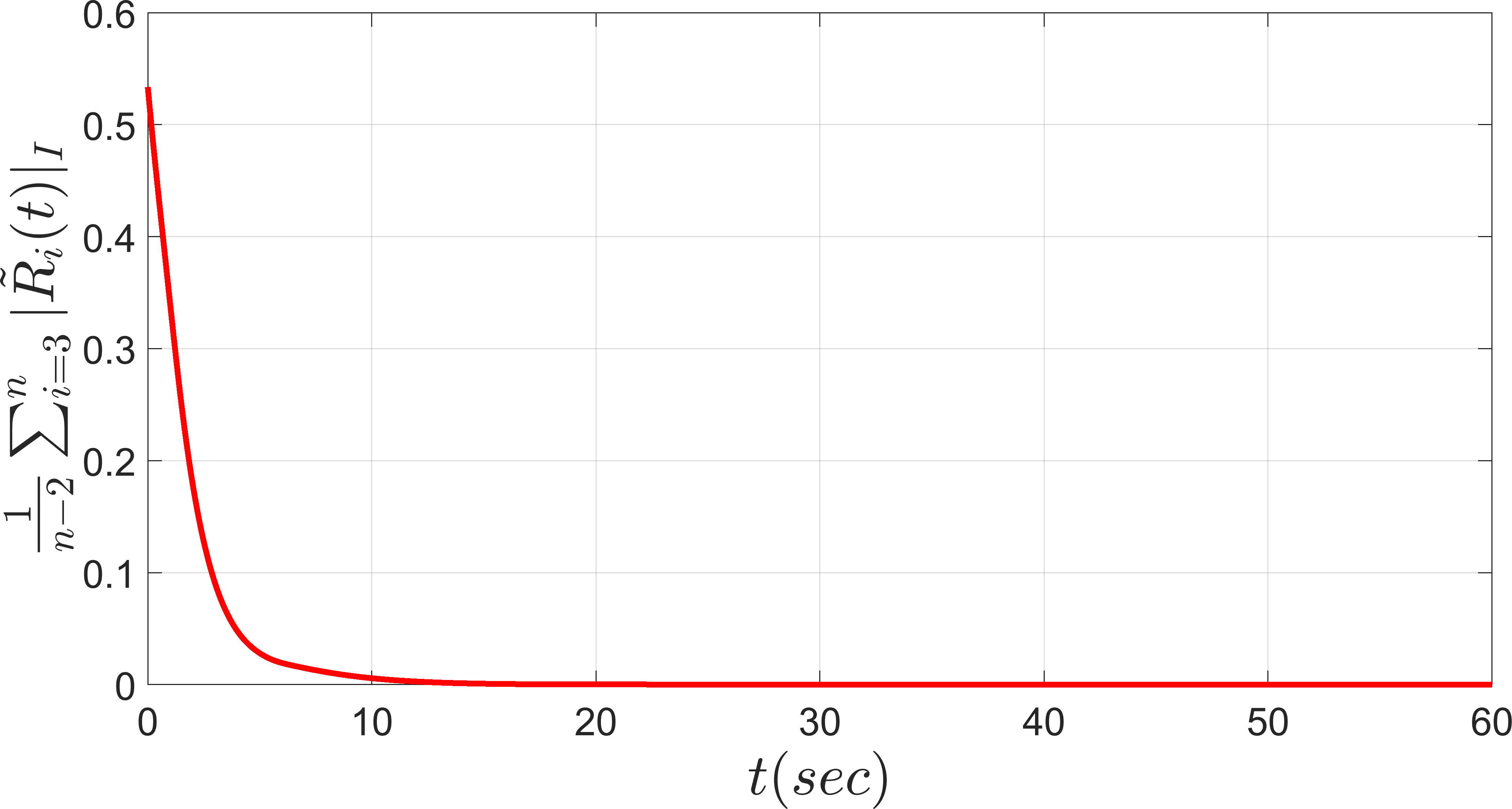}
    \caption{Time evolution of the average attitude estimation error norm.}
    \label{rotation_leader_follower}
\end{figure}
\begin{figure}[h]
    \centering
    \includegraphics[width=0.9\linewidth]{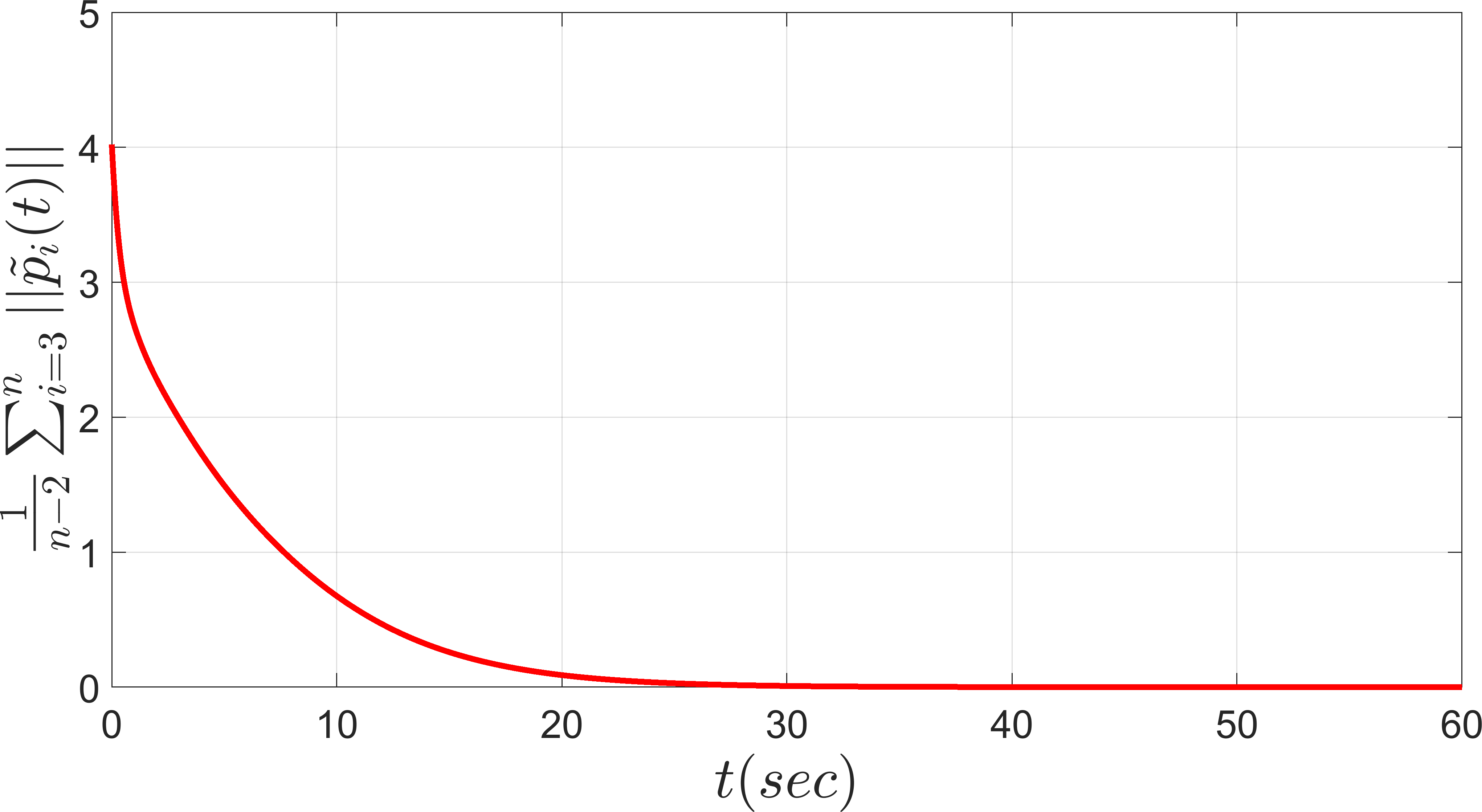}
    \caption{Time evolution of the average position estimation error norm.}
    \label{translation_leader_follower}
\end{figure}

\section{CONCLUSIONS}\label{s6}
In this work, we proposed an almost globally asymptotically stable bearing-based distributed pose estimation scheme for multi-agent networks, with a directed and acyclic interaction graph topology, with two leaders that have access to their respective pose information with respect to a global reference frame. The individual angular velocity and local inter-agent time-varying bearing measurements are assumed to be available. The proposed scheme consists of a cascade of an almost globally asymptotically stable distributed attitude observer and an ISS distributed position observer. It is worth pointing out that the distributed attitude observer is an interesting contribution on its own right as it is stand-alone (\textit{i.e.,} does not depend on the position estimation) and could be used in other applications that involve rotating multi-agent systems. 
An interesting extension of this work, would be the consideration of the case where the agents are allowed to have simultaneous translational and rotational motion.

\addtolength{\textheight}{-12cm}   




\bibliographystyle{IEEEtran}
\bibliography{References}

\begin{thebibliography}{10}
\providecommand{\url}[1]{#1}
\csname url@samestyle\endcsname
\providecommand{\newblock}{\relax}
\providecommand{\bibinfo}[2]{#2}
\providecommand{\BIBentrySTDinterwordspacing}{\spaceskip=0pt\relax}
\providecommand{\BIBentryALTinterwordstretchfactor}{4}
\providecommand{\BIBentryALTinterwordspacing}{\spaceskip=\fontdimen2\font plus
\BIBentryALTinterwordstretchfactor\fontdimen3\font minus
  \fontdimen4\font\relax}
\providecommand{\BIBforeignlanguage}[2]{{%
\expandafter\ifx\csname l@#1\endcsname\relax
\typeout{** WARNING: IEEEtran.bst: No hyphenation pattern has been}%
\typeout{** loaded for the language `#1'. Using the pattern for}%
\typeout{** the default language instead.}%
\else
\language=\csname l@#1\endcsname
\fi
#2}}
\providecommand{\BIBdecl}{\relax}
\BIBdecl

\bibitem{Zhao_auto2015}
S.~Zhao and D.~Zelazo, ``Localizability and distributed protocols for
  bearing-based network localization in arbitrary dimensions,''
  \emph{Automatica}, vol.~69, pp. 334--341, 2016.

\bibitem{Tang_CSL2023}
Z.~Tang and A.~Loría, ``Localization and tracking control of autonomous
  vehicles in time-varying bearing formation,'' \emph{IEEE Control Systems
  Letters}, vol.~7, pp. 1231--1236, 2023.

\bibitem{TANG2021109567}
Z.~Tang, R.~Cunha, T.~Hamel, and C.~Silvestre, ``Formation control of a
  leader–follower structure in three dimensional space using bearing
  measurements,'' \emph{Automatica}, vol. 128, p. 109567, 2021.

\bibitem{TANG2022110289}
------, ``Relaxed bearing rigidity and bearing formation control under
  persistence of excitation,'' \emph{Automatica}, vol. 141, p. 110289, 2022.

\bibitem{li_2020}
X.~Li, X.~Luo, and S.~Zhao, ``Globally convergent distributed network
  localization using locally measured bearings,'' \emph{IEEE Transactions on
  Control of Network Systems}, vol.~7, no.~1, pp. 245--253, 2020.

\bibitem{lee_2019}
B.-H. Lee, S.-M. Kang, and H.-S. Ahn, ``Distributed orientation estimation in
  so($d$) and applications to formation control and network localization,''
  \emph{IEEE Transactions on Control of Network Systems}, vol.~6, no.~4, pp.
  1302--1312, 2019.

\bibitem{lee_2016}
B.-H. Lee and H.-S. Ahn, ``Distributed estimation for the unknown orientation
  of the local reference frames in n-dimensional space,'' in \emph{2016 14th
  International Conference on Control, Automation, Robotics and Vision
  (ICARCV)}, 2016, pp. 1--6.

\bibitem{lee_auto2016}
------, ``Distributed formation control via global orientation estimation,''
  \emph{Automatica}, vol.~73, pp. 125--129, 2016.

\bibitem{Tran_CDC2018}
Q.~Van~Tran, H.-S. Ahn, and B.~D. O.~Anderson, ``Distributed orientation
  localization of multi-agent systems in 3-dimensional space with
  direction-only measurements,'' in \emph{2018 IEEE Conference on Decision and
  Control (CDC)}, 2018, pp. 2883--2889.

\bibitem{Tran_TCNS2019}
Q.~Van~Tran, M.~H. Trinh, D.~Zelazo, D.~Mukherjee, and H.-S. Ahn, ``Finite-time
  bearing-only formation control via distributed global orientation
  estimation,'' \emph{IEEE Transactions on Control of Network Systems}, vol.~6,
  no.~2, pp. 702--712, 2019.

\bibitem{Tran_TCNS2020}
Q.~Van~Tran and H.-S. Ahn, ``Distributed formation control of mobile agents via
  global orientation estimation,'' \emph{IEEE Transactions on Control of
  Network Systems}, vol.~7, no.~4, pp. 1654--1664, 2020.

\bibitem{TRAN2020109125}
Q.~V. Tran, B.~D. Anderson, and H.-S. Ahn, ``Pose localization of
  leader–follower networks with direction measurements,'' \emph{Automatica},
  vol. 120, p. 109125, 2020.

\bibitem{BOUGHELLABA2023110949}
M.~Boughellaba and A.~Tayebi, ``Comments on “pose localization of
  leader–follower networks with direction measurements” [automatica 120
  (2020) 109125],'' \emph{Automatica}, vol. 151, p. 110949, 2023.

\bibitem{Mouaad_CDC2022}
------, ``Leader-follower bearing-based distributed pose estimation for
  multi-vehicle networks,'' in \emph{2022 IEEE 61st Conference on Decision and
  Control (CDC)}, 2022, pp. 6562--6567.

\bibitem{Mayhew_ACC2011}
C.~G. Mayhew and A.~R. Teel, ``Synergistic potential functions for hybrid
  control of rigid-body attitude,'' in \emph{Proceedings of the 2011 American
  Control Conference}, 2011, pp. 875--880.

\bibitem{Miaomiao_TAC2021}
M.~Wang, S.~Berkane, and A.~Tayebi, ``Nonlinear observers design for
  vision-aided inertial navigation systems,'' \emph{IEEE Transactions on
  Automatic Control}, pp. 1--1, 2021.

\bibitem{Angeli_TAC2011}
D.~Angeli and L.~Praly, ``Stability robustness in the presence of exponentially
  unstable isolated equilibria,'' \emph{IEEE Transactions on Automatic
  Control}, vol.~56, no.~7, pp. 1582--1592, 2011.

\bibitem{Angeli_TAC2004}
D.~Angeli, ``An almost global notion of input-to-state stability,'' \emph{IEEE
  Transactions on Automatic Control}, vol.~49, no.~6, pp. 866--874, 2004.

\end{thebibliography}

\end{document}